\newtheorem{theorem}{Theorem}
\newtheorem{corollary}[theorem]{Corollary}
\newtheorem{lemma}[theorem]{Lemma}
\newtheorem{definition}[theorem]{Definition}
\newtheorem{proposition}[theorem]{Proposition}
\newtheorem{observation}[theorem]{Observation}
\newtheorem{remark}[theorem]{Remark}
\newtheorem{example}[theorem]{Example}
\newenvironment{proof}{\noindent\bf{Proof.}\rm}{\hfill$\blacksquare$\bigskip}
\newcommand{\mbc}[1]{{\MBcomment{#1}}}
\newcommand{\ufc}[1]{{\UFcomment{#1}}}
\newcommand{\OLD}[1]{}
\newcommand{\items}{{\cal{M}}}
\newcommand{\agents}{{\cal{N}}}
\newcommand{\MMSi}{MMS_i} 
\newcommand{\TPSi}{TPS_i} 
\newcommand{\TPSiThreePar}[3]{\TPSi(#1,#2,#3)}
\newcommand{\TPSiThree}{\TPSiThreePar{n}{\items}{v_i}}
\newcommand{\PSi}{PS_i}
\newcommand{\inst}{{\cal{I}}}
\begin{document}
	
	\title{
	On Best-of-Both-Worlds Fair-Share Allocations}
	\author{Moshe Babaioff\thanks{Microsoft Research ---  E-mail: \texttt{moshe@microsoft.com}.}, Tomer Ezra\thanks{Sapienza University of Rome ---  E-mail: \texttt{tomer.ezra@gmail.com}.}, Uriel Feige\thanks{Weizmann Institute and Microsoft Research ---  E-mail: \texttt{uriel.feige@weizmann.ac.il}. }}
	\date{%
    \today
}

	\maketitle
	
	\begin{abstract}
		We consider the problem of fair allocation of indivisible items among $n$ agents with additive valuations, when agents have equal entitlements to the goods, and there are no transfers. Best-of-Both-Worlds (BoBW) fairness mechanisms aim to give all agents both an ex-ante guarantee (such as getting the proportional share in expectation) and an ex-post guarantee. 
Prior BoBW results have focused on ex-post guarantees that are based on the ``up to one item" paradigm, such as envy-free up to one item (EF1). In this work we attempt to give every agent a high \emph{value} ex-post, and specifically, a constant fraction of her maximin share (MMS). The up-to-one-item paradigm fails to give such a guarantee, and it is not difficult to present examples in which previous BoBW mechanisms give some agent only a $\frac{1}{n}$ fraction of her MMS.

Our main result is a deterministic polynomial-time algorithm that computes a distribution over allocations that is ex-ante proportional, and ex-post, every allocation gives every agent at least her proportional share up to one item, and more importantly, at least half of her MMS. Moreover, this last ex-post guarantee holds even with respect to a more demanding notion of a share, introduced in this paper, that we refer to as the \emph{truncated proportional share} (TPS). Our guarantees are nearly best possible, in the sense that one cannot guarantee agents more than their proportional share ex-ante, and one cannot guarantee {all agents value larger} than a $\frac{n}{2n-1}$-fraction of their TPS ex-post.
	\end{abstract}

	\section{Introduction}
	

In this paper we consider fair allocation of indivisible items to agents with additive valuations.
An \emph{instance} $\inst=(v,\items,\agents)$ of the fair allocation problem consists of a set $\items$ of $m$ indivisible items, 
a set $\agents$ of $n$ agents, 
and vector $v=(v_1,v_2,\ldots,v_n)$ of non-negative additive valuations, with the valuation of  agent $i\in \agents$ for set $S\subseteq \items$ being $v_i(S) = \sum_{j \in S} v_i(j)$, where $v_i(j)$ denotes the value of agent $i$ for item $j\in \items$.
We assume that the valuation functions of the agents are known to the social planer, and that there are no transfers (no money involved). We further assume that all agents have equal entitlement to the items.
An allocation $A$ is a collection of $n$ disjoint bundles  $A_1, \ldots, A_n$ (some of which might be empty), where $A_i \subseteq \items$ for every $i\in \agents$. A randomized allocation is a distribution over deterministic allocations. 
We wish to design randomized allocations that enjoy certain fairness properties.


Before discussing some standard fairness properties, we briefly motivate the {\em best of both worlds} (BoBW) framework, that considers both ex-ante and ex-post properties of randomized allocations. Consider a simple allocation instance $\inst_1$ with two agents and two equally valued items. Intuitively, {any fair allocation in this case is an allocation } 
that gives each agent one of the items. Giving both items to one of the agents and no item to the other agent is not considered fair. Consider now an instance $\inst_2$ with two agents and just one item. As we want to allocate the item (to achieve Pareto efficiency) but the item is indivisible, we give it to one of the agents, and then the other agent gets no item. The fact that some agent receives no item is unavoidable, and in this respect the allocation is fair. {Yet, the agent not getting the item might argue that this deterministic allocation is unfair as she has the same right to the item as the other agent. Indeed, we can improve the situation at least ex-ante: }
We can invoke a lottery to decide at random which of the two agents gets the item. {While for any realization inevitably one agent is left with nothing, the allocation mechanism is ex-ante fair} 
(each agent has a fair chance to win the lottery). Going back to instance $\inst_1$, we could also have a lottery for $\inst_1$, and have the winner receive both items. This too would be ex-ante fair, but ex-post (with respect to the final allocation) it would not be fair (as we did have the option to choose an allocation that gives every agent one item). Examples such as those above illustrate why {we want our allocation mechanism to concurrently enjoy \emph{both} ex-ante and ex-post fairness guarantees, as each guarantee by itself seems not to be sufficiently fair}.    

For the purpose of defining ex-ante fairness properties of randomized allocations, we assume that agents are risk neutral. That is, the ex-ante value that an agent derives from a distribution over bundles is the same as the expected value of a bundle selected at random from this distribution. Consequently, when considering a distribution $D$ over allocations (of $\items$ to $\agents$), we also consider the expectation of this distribution, which can be interpreted as a {\em fractional allocation}. In this fraction allocation, the fraction of item $i$ given to agent $j$ exactly equals the probability with which agent $i$ receives item $j$ under $D$. We naturally extend the additive valuation functions of agents to fractional allocations, by considering the expected valuation, that is, an additive valuation where the value of a fraction $q_j$ of item $j$ to agent $i$ is $q_j \cdot v_i(j)$. 


	\subsection{Brief Review of Terminology and Notation}
	We briefly review some properties of allocations from the literature, properties that are most relevant to the current work and to prior related work. 

We start with standard share definitions. 
The {\em proportional share} of agent $i$ is $PS_i = \frac{v_i(\items)}{n}$. We say that an allocation  $A = (A_1, \ldots, A_n)$ is {\em proportional} if every agent $i$ gets value at least $PS_i$ (that is, $v_i(A_i)\geq \frac{v_i(\items)}{n}= \PSi$), and a fractional (randomized) allocation is {\em ex-ante proportional} if she gets her proportional share in expectation. We say that an allocation $A$ is \emph{proportional up to one item (Prop1)} 
if for every agent $i$ it holds that $v_i(A_i)\geq PS_i - \max_{j \in \items\setminus A_i} v_i(j)$. 
The {\em maximin share} $MMS_i$ of agent $i$ 
is the maximum value that $i$ could secure if she was to partition $\items$ into $n$ bundles, and receive the bundle with the lowest value under $v_i$. 

We next discuss envy. An allocation is {\em envy free} (EF) if every agent (weakly) prefers her own bundle over that of any other agent, and a fractional (randomized) allocation is {\em ex-ante envy free} if for every agent, the expected value of her own allocation is at least as high as the expected value of the allocation of any other agent. 
Note that an allocation that is ex-ante envy free is ex-ante proportional.
An allocation is \emph{envy-free up to one good (EF1)} 
(\emph{envy-free up to any good (EFX)}, 
respectively) if every agent weakly prefers her own bundle over that of any other agent, up to the most (least, respectively) valuable item in the other agent's bundle.
Note that EF1 implies Prop1.
Finally, an allocation is \emph{envy-free up to one good more-and-less ($EF^1_1$)} 
if no agent $i$ envies another agent $j$ after removing one item from the set $j$ gets, and adding one item (not necessarily the same item) to $i$. 
Note that $EF^1_1$ is weaker than $EF1$.

Finally, we consider notions of efficiency. 
An (fractional) allocation \emph{Pareto dominates} another (fractional) allocation if it is weakly preferred by all agents, and strictly so by at least one. 
An integral allocation is \emph{Pareto optimal (PO)} if no integral allocation Pareto dominates it. 
An allocation (integral or fractional) is \emph{fractionally Pareto optimal (fPO)} if it is Pareto optimal, and moreover, no fractional allocation Pareto dominates it. 
Another notion of efficiency is that of \emph{Nash Social Welfare maximization}. 
The \emph{Nash Social Welfare (NSW)} of allocation $A = (A_1, \ldots, A_n)$ is $\left( \prod_{i \in \agents} v_i(A_i) \right)^{\frac{1}{n}}$. In case of fractional allocations, we use the notation fNSW.

	\subsection{Previous BoBW Results for Additive Valuations}
		\label{sec:previousBoBW}

The  state of the art BoBW results for additive valuations are presented in the two recent papers of  \citet{FSV20, Aziz20}.
Both of these works are based on the well known paradigm that we call here ``faithful implementation of a fractional allocation":  
	a distribution over deterministic allocations is a \emph{faithful implementation of the fractional allocation} if 
	the ex-ante (expected) value of every agent under the distribution is the same as it is in the fractional allocation, and ex-post (for any realization) it is the same as the expectation, up to the value of one item. 
	Both papers use versions of the result of \citet{BCKM13} showing that any fractional allocation can be  faithfully implemented. 
Various versions of these results were presented in the past, and in Appendix~\ref{sec:faithful} we survey those results. 
In Section~\ref{sec:preliminaries} we formally present a version of ``faithful implementation" that summarizes 
the prior results, stated as Lemma~\ref{lem:faithful}.

By ``faithful implementing" the fractional allocation that is the outcome of multiple executions of the probabilistic serial mechanism (a.k.a. {\em eating} mechanism)  of~\citet{BM01} till there are no more items, the following BoBW result was proved in~\citep{Aziz20}. (The same theorem was established earlier in~\citep{FSV20}, but with a somewhat more complicated proof.)


\begin{theorem}[\citep{FSV20, Aziz20}]
	\label{thm:Aziz20}
	There is a deterministic polynomial-time faithful implementation of a fractional allocation that is ex-ante envy free (and thus ex-ante proportional), and the implementation is supported on allocations that are (ex-post) EF1. 
\end{theorem}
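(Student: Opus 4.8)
The plan is to produce the claimed distribution in two stages: first construct a suitable fractional allocation via the \emph{eating} (probabilistic serial) mechanism of \citet{BM01}, and then round it to a distribution over integral allocations using the faithful-implementation result (Lemma~\ref{lem:faithful}), whose underlying decomposition is that of \citet{BCKM13}.

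First I would run the eating mechanism to completion. All $n$ agents start at time $0$ and simultaneously ``eat'' mass at unit rate, each always consuming her most valued good among those not yet fully consumed; whenever a good is exhausted, every agent still eating it switches to her current favorite remaining good. Since $n$ agents eat at unit rate and there are $m$ unit-mass goods, all goods are consumed at time $T = m/n$, and each agent has eaten total mass $m/n$. Let $x_{ij}$ denote the fraction of good $j$ eaten by agent $i$; this is the desired fractional allocation. I would then verify the ex-ante guarantee: because at every instant each agent eats a good she values at least as much as whatever any other agent is eating at that instant, a standard integration argument shows $v_i(x_i) \ge v_i(x_k)$ for all $i,k$, so the fractional allocation is ex-ante envy free, and in particular ex-ante proportional.

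Next I would invoke Lemma~\ref{lem:faithful} to faithfully implement $x$, obtaining a distribution over integral allocations whose ex-ante values match those of $x$ (so ex-ante envy-freeness, and hence ex-ante proportionality, is inherited) and in which, ex-post, each agent's realized value differs from her expectation by at most the value of a single good. The remaining and main task is to strengthen this ``within one good'' property into the pairwise EF1 guarantee.

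The hard part is exactly this last step, because the bare marginal guarantee of Lemma~\ref{lem:faithful} controls only $v_i(A_i)$ relative to $v_i(x_i)$, whereas EF1 compares $v_i(A_i)$ with $v_i(A_k)$ and therefore also requires control of $v_i(A_k)$ relative to $v_i(x_k)$ for every \emph{other} agent $k$. To obtain this I would exploit the additional structure of the eating process: the implementation can be taken to respect the chronological order in which goods are (partially) eaten, so that for every prefix of eating events and every agent, the number of those goods the agent receives in any realization is a rounding (floor or ceiling) of the corresponding fractional amount. Fixing a realization and a pair $i,k$, I would then combine fractional envy-freeness $v_i(x_i) \ge v_i(x_k)$ with these per-prefix rounding bounds to argue that deleting from $A_k$ the single good agent $i$ values most drops $v_i(A_k)$ to at most $v_i(A_i)$, yielding EF1. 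The delicate point, and the step I expect to require the most care, is coordinating the rounding across all agents simultaneously rather than marginally---this is precisely what the bihierarchy decomposition of \citet{BCKM13} underlying Lemma~\ref{lem:faithful} provides---together with the bookkeeping of eating times needed to ensure the ``one good'' removed on behalf of $i$ versus $k$ is accounted for consistently.
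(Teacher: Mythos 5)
Your high-level route is the one the paper itself points to for this theorem (which it cites rather than proves): run the probabilistic serial mechanism until all goods are exhausted, then faithfully implement the resulting fractional allocation with a Birkhoff--von Neumann-type decomposition that respects the eating chronology, exactly as in \citet{Aziz20} (and, with more work, \citet{FSV20}). Your ex-ante argument (instantaneous greedy eating implies fractional envy-freeness, which an implementation preserves in expectation) is correct, and you correctly identify that the bare guarantee of Lemma~\ref{lem:faithful} controls only $v_i(A_i)$ against $v_i(x_i)$ and is therefore insufficient for EF1.

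There is, however, a genuine gap in the ingredient you propose for the EF1 step. Fractional envy-freeness $v_i(x_i)\ge v_i(x_k)$ combined with per-prefix floor/ceiling rounding does \emph{not} imply EF1, because the prefixes are taken in each agent's \emph{own} eating order: they control only the \emph{number} of goods agent $k$ receives per time window, not their $v_i$-values. Concretely, if in some unit window agent $k$ eats $0.99$ of a good $a$ with $v_i(a)=0$ and $0.01$ of a good $b$ with $v_i(b)$ huge, then in the realization where $k$ receives $b$, the value $v_i(A_k)$ exceeds $v_i(x_k)$ by far more than any single rounding bound can account for, and fractional envy-freeness gives no compensating lower bound on $v_i(A_i)$ in that same realization. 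The correct argument does not use fractional envy-freeness at all for the ex-post part; it uses the chronological domination inherent in greedy eating: availability only shrinks over time, so any good $k$ eats during window $t+1$ was available throughout window $t$, during which $i$ was always eating goods she weakly $v_i$-prefers; hence the single good $i$ receives from window $t$ satisfies $v_i(\cdot)\ge v_i(k\text{'s window-}(t+1)\text{ good})$, and summing over windows and deleting $k$'s first-window good yields EF1. To make this work you must also open the black box of Lemma~\ref{lem:faithful}: its statement exposes only the small-spread property, but the clone construction in its proof (Appendix~\ref{sec:faithful}) orders each agent's clones by decreasing own value, which for a probabilistic-serial outcome coincides with the eating chronology (each agent eats at unit rate in weakly decreasing order of her own values), so clone $t$ corresponds precisely to the time window $[t-1,t)$ and receives exactly one good from it---this is the window coordination you gestured at, made explicit.
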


By ``faithful implementing" the fractional allocation that maximizes the fractional Nash Social Welfare, the following BoBW result was proved in~\citep{FSV20}. 

\begin{theorem}[\citep{FSV20}]
	\label{thm:FSV20}
	There is a deterministic polynomial-time faithful implementation of a fractional allocation that is ex-ante fPO and ex-ante proportional, and the implementation is supported on allocations that are (ex-post) fPO, Prop1, and $EF^1_1$.
\end{theorem}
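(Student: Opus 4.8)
The plan is to take the fractional allocation $x$ that maximizes the fractional Nash Social Welfare (fNSW) and feed it into the faithful implementation of Lemma~\ref{lem:faithful}. Computing $x$ can be done in deterministic polynomial time via the Eisenberg--Gale convex program, which also yields a supporting price vector $p$ and exhibits $x$ as a competitive equilibrium from equal incomes (CEEI): each agent $i$ has a common budget $B$, spends it entirely ($p(x_i)=B$), and buys only her maximum bang-per-buck (MBB) goods, i.e.\ goods $k$ attaining $\alpha_i=\max_{k'} v_i(k')/p_{k'}$, so that $v_i(x_i)=\alpha_i B$. I will rely on two properties of the implementation produced by Lemma~\ref{lem:faithful}: (i) it is support-respecting, i.e.\ in every realized allocation $A$ we have $A_i\subseteq\mathrm{supp}(x_i)$, so each agent still holds only her own MBB goods; and (ii) it is faithful in the two-sided ``up to one item'' sense, i.e.\ for every agent $i$ and realization, $v_i(A_i)$ lies within the value of a single support item of the expectation $v_i(x_i)$.

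Given this, the ex-ante claims are immediate: the expectation of the distribution is $x$ itself, which is fPO (maximizing the product over the feasible polytope, equivalently being supported by the MBB prices $p$) and proportional (in CEEI each agent can afford the bundle $\tfrac1n\items$ at cost $\tfrac1n p(\items)=B$, hence $v_i(x_i)\ge PS_i$). Two of the three ex-post claims are also short. For ex-post fPO, since every realized $A$ is support-respecting, each agent holds only MBB goods at the prices $p$, and MBB-supported allocations are fPO; the same $p$ therefore certifies that every integral $A$ in the support is fPO. For ex-post Prop1, faithfulness gives $v_i(A_i)\ge v_i(x_i)-v_i(h)$ for some item $h$ that $i$ holds fractionally but lost in the rounding, so $h\notin A_i$; combined with $v_i(x_i)\ge PS_i$ this yields $v_i(A_i)\ge PS_i-\max_{k\notin A_i}v_i(k)$.

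The crux is the ex-post $EF^1_1$ guarantee, which I would prove through the price structure. Fix a realized allocation $A$ and agents $i,j$. Because each agent holds only MBB goods, $v_i(A_i)=\alpha_i\,p(A_i)$ while $v_i(A_j)\le\alpha_i\,p(A_j)$ for the off-diagonal term. Applying two-sided faithfulness to $j$ gives an item $g\in A_j$ (an ``extra'' item $j$ gained in rounding, hence a support item and so MBB for $j$ with $v_j(g)=\alpha_j p_g$) such that $v_j(A_j)\le \alpha_j B+v_j(g)$, which rewrites as $p(A_j)-p_g\le B$. Applying it to $i$ gives the lost item $h\notin A_i$ with $v_i(A_i)+v_i(h)\ge \alpha_i B$, i.e.\ $v_i(A_i\cup\{h\})\ge\alpha_iB$. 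Chaining these,
\[
v_i(A_i\cup\{h\})\;\ge\;\alpha_i B\;\ge\;\alpha_i\bigl(p(A_j)-p_g\bigr)\;=\;\alpha_i\,p(A_j\setminus\{g\})\;\ge\;v_i(A_j\setminus\{g\}),
\]
so removing the single item $g$ from $A_j$ and adding the single item $h$ to $A_i$ eliminates the envy, which is exactly $EF^1_1$.

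The main obstacle is this last step, and specifically the need to translate the per-agent ``value up to one item'' faithfulness into control over each agent's \emph{price-spending} $p(A_i)$; this translation is what makes the cross-agent comparison go through, and it crucially uses both that the fNSW/CEEI allocation is supported by MBB prices and that the implementation respects its support. Everything else reduces to the ex-ante guarantees of the fNSW optimum and the standard properties of the faithful implementation of Lemma~\ref{lem:faithful}.
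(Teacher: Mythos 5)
Your overall route coincides with the paper's: Theorem~\ref{thm:FSV20} is quoted from \citet{FSV20}, and the paper's only indication of proof is exactly your plan --- faithfully implement, via Lemma~\ref{lem:faithful}, the fractional allocation maximizing fractional Nash Social Welfare --- while your fleshed-out details (Eisenberg--Gale/CEEI prices, the MBB structure yielding ex-ante and ex-post fPO and ex-ante proportionality, and the spending-based $EF^1_1$ chain) essentially reconstruct the argument of \citet{FSV20} themselves.

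There is, however, a genuine gap in what you extract from Lemma~\ref{lem:faithful}. The lemma guarantees only that $|v_i(A_i)-v_i(A^*_i)|$ is at most the value of agent $i$'s most valuable \emph{strictly fractional} item; it says nothing about whether that slack item lies inside or outside the realized bundle, and it does not assert support-respecting rounding. You explicitly flag support-respecting as your assumption (i), and this is legitimate: it holds for the clone/Birkhoff--von Neumann construction in Appendix~\ref{sec:faithful} and is preserved by the support-reduction LP, which only reweights allocations already produced. But your crux steps silently use strictly more than your stated assumption (ii): you need the ``gained'' item $g$ to lie in $A_j$ (otherwise $p(A_j)-p_g$ is not $p(A_j\setminus\{g\})$, and removing $g$ from $A_j$ is not a legal $EF^1_1$ move), and the ``lost'' item $h$ to lie outside $A_i$ (needed both for the $EF^1_1$ addition and for Prop1, whose definition subtracts $\max_{j\in\items\setminus A_i}v_i(j)$). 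Neither membership condition follows from the lemma as stated: the most valuable fractional item can perfectly well be absent from $A_j$, or present in $A_i$. These refined bounds are true for the eating-based implementation --- chaining the min/max inequalities over consecutive clones gives $v_j(A_j)\le v_j(A^*_j)+v_j(c_1)$ where $c_1\in A_j$ is the item received by $j$'s first clone, and a symmetric (slightly more delicate) argument produces a lost item outside $A_i$ whenever there is any loss --- but this is an additional property of the specific construction that must be stated and proved, as \citet{FSV20} do, rather than a consequence of the lemma you cite. With that strengthening in hand, the remainder of your argument goes through.
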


The ``up to one item" paradigm used in the ex-post guarantees of Theorems~\ref{thm:Aziz20} and~\ref{thm:FSV20} is most useful when a difference of one item does not make a big difference in value.
However, when items do have large values,  it does not guarantee  agents a high ex-post value. 
In contrast, we aim to give each agent ``high enough value" ex-post, where value is measure compared to ``what the agent deserves", captured by her fair share.
Specifically, we 	
aim to give every agent a large fraction (``an approximation") of her ``fair share", e.g. half the agent's MMS share.
The following allocation instance shows that neither Theorem~\ref{thm:Aziz20} nor Theorem~\ref{thm:FSV20} provide a constant approximation for the MMS ex-post, {and both are supported only on allocations that are intuitively very unfair}. 
Moreover, in this instance the MMS equals the proportional share, and hence one cannot dismiss this example as one in which the MMS is too small for the agents to care about.

Consider an instance with $n$ identical items, each of value $n$. In this case it is clear each agent should get one item ex-post. Now, {suppose that one} 
of those big items is split into $n$ small items, each of value 1. In this case we want one agent to get all of these small items, and each other agent to get one of the big items. Our next example shows that once these small items are not completely identical, but rather each agent slightly prefers a different one of them, 
then in both prior BoBW results, {in every realization},
one of the agents ends up 
getting only a small fraction of her MMS. 

\begin{example}
	\label{ex:notMMS}
	The instance has $2n-1$ items $\{s_1,s_2, \ldots , s_{n}\} \cup \{b_1,b_2, \ldots , b_{n-1}\}  $. 
	For some small $\epsilon > 0$, for every agent $i\in \agents$, the additive valuation function $v_i$ is as follows:
	
	\begin{itemize}
		\item $v_i(s_i) = 1 + \epsilon$. 
		\item {$v_i(s_j) = 1-\frac{\epsilon}{n-1}$} for every $1 \le j \le n$ such that $j \not= i$. 
		\item $v_i(b_j) = n$ for every $1 \le j \le n-1$. 
	\end{itemize}
	
	The MMS of every agent is $n$: one bundle contains all small items $\{s_1, \ldots, s_n\}$, and the remaining bundles each contain one of the remaining, big, items. The proportional share of every agent {is also $n$.} 
	
	{In every allocation, at least on agent does not receive a big item, as there are fewer big items than agents.
	The algorithms of Theorem~\ref{thm:Aziz20} gives every agent at most $\lceil \frac{|\items|}{n} \rceil$ items. Hence the agent that does not receive a big item receives a value of at most $2 - \epsilon$, whereas her MMS is $n$. 
	
	The algorithm of Theorem~\ref{thm:FSV20} starts with a fractional allocation 
	that maximizes the fractional Nash Social Welfare. This fractional allocation necessarily allocates the small item $s_i$ integrally to agent $i$, for every $i \le n$. Consequently, also ex-post, every agent $i$ gets the respective item $s_i$. 
	By the pigeon-hole principle, in an ex-post allocation there is an agent that receives no item among the big items $\{b_1,b_2, \ldots , b_{n-1}\} $. This agent $i$ receives only the small item $s_i$, and hence only a $\frac{1 + \epsilon}{n}$ fraction of her MMS.} 
\end{example}

	\subsection{Our Contributions}

In this paper we aim for a Best-of-Both-Worlds fairness result: a randomized allocation that gives every agent at least her proportional share ex-ante, and some guaranteed value ex-post. {The ex-post guarantee we give is at least half the MMS, and in fact, stronger.} We introduce  a new notion of share that we refer to as the {\em truncated proportional share (TPS)}, which we believe might be of independent interest. 
	We show that the TPS is at least as large as the MMS, and our BoBW result guarantees half of the TPS ex-post (and thus half the MMS ex post), while also giving each agent her proportional share ex-ante. 



\subsubsection{The Truncated Proportional Share}
We next define the Truncated Proportional Share of an agent with an additive valuation. 
	As we will see later, this share has two advantages over MMS: it is at least as high as the MMS, and while the MMS is NP-hard to compute, the TPS is easy to compute. 
	We alert the reader that in this paper we define TPS only with respect to additive valuation functions (while the definition of MMS extends without change beyond additive valuations).
\begin{definition}
	\label{def:TPS}
	For a setting with $n$ agents and a set of items $\items$, the \emph{truncated proportional share} $\TPSi= \TPSiThree$  of agent $i$ with additive valuation function $v_i$ is the largest value $t$ such that $\frac{1}{n} \sum_{j\in \items} \min[v_i(j), t] = t$. 
\end{definition}

{We note that the TPS is well defined, as $t=0$ satisfies the equality, and the maximum is obtained as the RHS is linear, while the LHS is piece-wise-linear with finitely many segments (at most $m$). 
}
From the definition of TPS it is immediate to see that $\TPSi \le \PSi$, but $\TPSi$ is smaller than $\PSi$ when there is at least one 
\emph{over-proportional item}, which is an item that by itself gives agent $i$ value larger than $\PSi$. 
	Observe that if the value of every over-proportional  items is reduced to $TPS_i$, then the TPS is the proportional share of the resulting valuation function after these reductions. 
	In absence of over-proportional items, clearly $TPS_i = PS_i$. Yet, when there are over-proportional items, $\TPSi$ might be much smaller than $\PSi$. For example, whenever there are less items than agents (e.g. a single item and two agents that desire it) then for every agent $\TPSi=\MMSi=0$ while $\PSi>0$. In any such case, it is clearly impossible to concurrently give all agents a positive fraction of their proportional share ex post (while the truncated proportional shares and the maximin share are small enough to make  their approximation plausible). 

Moreover, regardless of the presence of over-proportional items, $TPS_i \ge MMS_i$. This is because taking $t = MMS_i$ satisfies $\frac{1}{n} \sum_{j\in \items} \min[v_j, t] \ge t$ (as every one of the $n$ bundles in the partition that determines $MMS_i$ contributes at least $t$ to the sum), which implies that $t$ in Definition~\ref{def:TPS} is at least as large as $MMS_i$. Hence $MMS_i \le TPS_i \le PS_i$. In particular, guarantees with respect to the TPS imply at least the same guarantees with respect to MMS, and sometimes better. 

{The following example illustrates the TPS definition:}
{
\begin{example} \label{ex:tps}
There are $n=4$ agents and $m =5$ items.
The values of the items for agent $i$ are  $2,3,4,5,6$.
Her proportional share $\PSi$ equals $\frac{2+3+4+5+6}{4} = 5$, and her truncated proportional share $\TPSi$ can be seen to be $4.5$. Her TPS is at least $4.5$ since $\frac{2+3+4+4.5+4.5}{4} =4.5$.
Her TPS is at most $4.5$ since for every $t>4.5$, it holds that $\frac{2+3+4+\min(5,t) +\min(6,t)}{4} \leq \frac{9+2t}{4}<t$.
\end{example}
}


The TPS is a more tractable object than the MMS.
It is not difficult to see that 
the TPS can be recursively defined as follows: when $n=1$ then $\TPSi= \TPSiThreePar{1}{\items}{v_i} = v_i(\items)$, and when $n \ge 2$ then 
	$TPS_i$ is the minimum among $\frac{v_i(\items)}{n}$, the proportional share of agent $i$, and her TPS in a {\em reduced instance} in which an item $j$ of highest value is removed as well as one of the agents, that is, $\TPSi$ in this case is $\TPSiThreePar{n-1}{\items\setminus\{j\}}{v_i}$.
This procedure provides a simple polynomial time algorithm for computing the TPS: if the proportional share of the reduced instance is smaller than that of the original instance, compute $TPS_i$ for the reduced instance. If not, then $TPS_i$ is the proportional share of the original instance. (In contrast, computing the MMS is NP-hard.) {To demonstrate the calculation of the TPS, consider  Example~\ref{ex:tps}.
We first check whether her maximal value ($6$) is greater than her proportional share ($5$). Since $6>5$, we remove one agent and the maximal item, and calculate the TPS for the remaining items and agents.
The new proportional share is now $\frac{2+3+4+5}{3}=\frac{14}{3}$, while the maximal value is $5$ which is still greater. Thus, we  remove again one agent, and the maximal item.
Now, the proportional share is $\frac{2+3+4}{2}=4.5$ and the maximal value is $4$ which is at most her proportional, therefore $4.5$ is her TPS.}

Moreover, consider $\rho_{TPS}$, the highest fraction such that in every instance, there is an allocation giving every agent a $\rho_{TPS}$ fraction of her TPS. It is easy to determine the exact value of $\rho_{TPS}$, which turns out to be $\frac{n}{2n-1}$. 
(In contrast, the exact value of the corresponding $\rho_{MMS}$ is unknown~\citep{KurokawaPW18,GT20}.) To see that $\rho_{TPS} \ge \frac{n}{2n-1}$, we observe that a polynomial time allocation algorithm of~\citet{LMMS04} 
gives every agent a $\frac{n}{2n-1}$ fraction of her TPS. (For more details, see Appendix~\ref{app:proofs}.) To see that $\rho_{TPS} \le \frac{n}{2n-1}$, consider an instance with $2n-1$ items, each of value~1. The TPS of every agent is $\frac{2n-1}{n}$, but in every allocation, at least one of the agents gets at most one item, and hence value at most~1. 

The example above also shows that the TPS of an agent can be factor $\frac{2n-1}{n}$ larger than her MMS.
{This ratio is tight, because $MMS_i \ge \frac{n}{2n-1} TPS_i$ for every agent $i$. This follows by considering $n$ agents with the same valuation function $v_i$,
and recalling that there is an allocation that gives every agent at least a $\frac{n}{2n-1}$ fraction of her TPS. The $n$ bundles of this allocation each have a value of at least $\frac{n}{2n-1} TPS_i$, and hence they form a partition of $\items$ that shows that $MMS_i \ge \frac{n}{2n-1} TPS_i$.} 
{We summarize the above discussion in the following proposition:}
\begin{proposition}
	For any setting with $n$ agents and any additive valuation $v_i$ it holds that \\
	$$
	PS_i \geq TPS_i \geq 	MMS_i \geq  \tfrac{n}{2n-1} \cdot TPS_i
	$$
	Moreover, each of the above inequalities is strict for some instance, and holds as equality for some other instance.  
\end{proposition}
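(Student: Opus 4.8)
The plan is to establish the chain $PS_i\ge TPS_i\ge MMS_i\ge\frac{n}{2n-1}TPS_i$ one link at a time and then exhibit small instances witnessing strictness and equality for each link. Throughout write $f(t)=\frac1n\sum_{j\in\items}\min[v_i(j),t]$, so that by Definition~\ref{def:TPS} the value $TPS_i$ is the largest fixed point of $f$. The leftmost inequality is immediate: since $\min[v_i(j),t]\le v_i(j)$ we have $f(t)\le\frac1n\sum_{j\in\items} v_i(j)=PS_i$ for every $t$, and substituting $t=TPS_i$ gives $TPS_i=f(TPS_i)\le PS_i$.

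For $TPS_i\ge MMS_i$ I would evaluate $f$ at $t=MMS_i$. Fix a maximin partition of $\items$ into bundles $B_1,\dots,B_n$ with $v_i(B_k)\ge MMS_i$ for all $k$. The key sub-claim is that each bundle contributes at least $MMS_i$ to $f(MMS_i)$, that is $\sum_{j\in B_k}\min[v_i(j),MMS_i]\ge MMS_i$: either some item of $B_k$ has value at least $MMS_i$, in which case its truncated value alone equals $MMS_i$, or every item of $B_k$ has value below $MMS_i$, in which case the truncated values sum to $v_i(B_k)\ge MMS_i$. Summing over the $n$ bundles gives $f(MMS_i)\ge MMS_i$. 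Since $f$ is continuous and bounded above by $PS_i$ while $t$ grows without bound, the function $f(t)-t$ is nonnegative at $t=MMS_i$ and eventually negative, so by the intermediate value theorem it has a root $t^\star\ge MMS_i$; as $TPS_i$ is the largest fixed point of $f$, we conclude $TPS_i\ge t^\star\ge MMS_i$.

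The rightmost inequality $MMS_i\ge\frac{n}{2n-1}TPS_i$ is where I expect the main obstacle, since a pointwise substitution no longer suffices and one must invoke the existence of a good allocation. I would use a cloning argument together with the bound $\rho_{TPS}\ge\frac{n}{2n-1}$ realized by the allocation of \citet{LMMS04} discussed above (see Appendix~\ref{app:proofs}). Consider the auxiliary instance on the same items in which all $n$ agents share the single valuation $v_i$. It admits an allocation giving every agent at least a $\frac{n}{2n-1}$ fraction of her TPS, and because the agents are identical this is exactly a partition of $\items$ into $n$ bundles each of $v_i$-value at least $\frac{n}{2n-1}TPS_i$. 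By definition of the maximin share, this partition certifies $MMS_i\ge\frac{n}{2n-1}TPS_i$.

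It remains to show each inequality is tight and strict for suitable instances, which I would do with a handful of small examples (for $n\ge2$). With $n$ identical items of value $1$ one has $PS_i=TPS_i=MMS_i=1$ while $\frac{n}{2n-1}TPS_i<1$, giving equality in the first two links and strictness in the third. With $2n-1$ items of value $1$ one has $PS_i=TPS_i=\frac{2n-1}{n}$ but $MMS_i=1=\frac{n}{2n-1}TPS_i$, since any partition into $n$ bundles leaves some bundle with a single item; this gives strictness in the second link and equality in the third. Finally, any instance with an over-proportional item makes the first link strict, e.g.\ two agents with items of values $3$ and $1$, where $PS_i=2>1=TPS_i$. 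The only genuinely nontrivial verifications are the per-bundle bound and the intermediate-value step in the $TPS_i\ge MMS_i$ argument, and the correctness of the cloning reduction underlying the rightmost inequality.
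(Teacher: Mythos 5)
Your proposal is correct and follows essentially the same route as the paper: the first inequality from the definition, the second by substituting $t=MMS_i$ and showing each bundle of the maximin partition contributes at least $MMS_i$ to the truncated sum, and the third by cloning $n$ agents with valuation $v_i$ and invoking the $\frac{n}{2n-1}$-fraction-of-TPS allocation of \citet{LMMS04} so that the resulting bundles certify the MMS bound, with the same tightness examples ($2n-1$ unit items, etc.). Your intermediate-value-theorem step merely makes explicit a detail the paper leaves implicit (that $f(MMS_i)\ge MMS_i$ forces the largest fixed point of $f$ to be at least $MMS_i$), so there is no substantive difference.
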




\subsubsection{Our Best-of-Both-Worlds Result}
We now return to present our main result. Due to the difficulties alluded to in Example~\ref{ex:notMMS} and Proposition~\ref{pro:noParetoExAnte}, we do not follow the paradigm of starting with a simple to describe fractional allocation, and then faithfully implementing it (using Lemma~\ref{lem:faithful}). 
Instead, we design an algorithm that generates a distribution over allocations that each gives every agent at least half of her TPS, with the additional property that every agent gets at least her proportional share in expectation. Along the way, we do use Lemma~\ref{lem:faithful}, but we apply it on fractional allocations that involve only carefully selected subsets of $\items$, rather than a fractional allocation that involves all of $\items$. 
Our main result is the following.

\begin{restatable}{theorem}{thmBoBW}		
	\label{thm:BoB}
	For every allocation instance with additive valuations, there is a randomized allocation that is ex-ante proportional, 
	and gives each agent at least half of her TPS ex-post (and hence also at least half of her MMS), as well as being Prop1 ex-post. Moreover, there is a deterministic polynomial time algorithm that, given the valuation functions of the agents, computes 
	such a randomized allocation, supported on at most $n$ allocations.
\end{restatable}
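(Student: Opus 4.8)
The plan is to construct explicitly a family of at most $n$ deterministic allocations $A^{(1)}, \ldots, A^{(n)}$ together with a probability vector $(p_1, \ldots, p_n)$, and to verify three separate properties: each $A^{(k)}$ gives every agent $i$ value at least $\frac{1}{2}TPS_i$ and is Prop1 (the two ex-post guarantees), while $\sum_k p_k \, v_i(A^{(k)}_i) \ge PS_i$ for every $i$ (ex-ante proportionality). The central tension, already visible in Example~\ref{ex:notMMS}, is that the ex-ante target $PS_i$ can be arbitrarily larger than the ex-post floor $\frac{1}{2}TPS_i$ (indeed $PS_i / TPS_i$ is unbounded once a single item is very valuable). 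Hence the expectation can only reach $PS_i$ if, in some realizations, agent $i$ receives a very valuable item; but in any one realization only a few agents can receive such items, so the construction must guarantee that every agent who misses the valuable items in a given realization still collects enough of the remaining items to clear her $\frac{1}{2}TPS_i$ floor.

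First I would isolate the items responsible for this tension. Call an item \emph{over-proportional} for agent $i$ if its value alone exceeds $PS_i$; a counting argument shows each agent has at most $n-1$ such items, since $n$ items each of value exceeding $v_i(\items)/n$ would already overshoot $v_i(\items)$. These are exactly the items that the TPS truncates, and, via the recursive description of the TPS (peeling off a highest-value item and one agent), they are the items whose rotation lets expectations climb to $PS_i$ without disturbing the $\frac{1}{2}TPS_i$ floor. I would therefore set up a fractional assignment of the ``large'' (over-proportional) items to agents that is balanced enough to supply the missing ex-ante value, and decompose it into at most $n$ integral partial matchings; each matching gives any agent at most one large item and rotates the large items across the support, which is what bounds the support by $n$.

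With the large items pinned down in each of the $n$ allocations, the remaining (small) items form a residual instance. Here I would invoke Lemma~\ref{lem:faithful}, but only on this carefully chosen subset of $\items$ (as the authors announce): faithfully implementing an ex-ante proportional fractional allocation of the small items yields, ex-post, value within one small item of its expectation, which delivers both the Prop1 guarantee and, for any agent not handed a large item in that realization, enough small-item value to reach $\frac{1}{2}TPS_i$ (an agent who does receive a large item already clears the floor outright, since such an item is worth more than $PS_i \ge TPS_i$). Coupling the small-item faithful implementation with the large-item matchings into $n$ joint allocations, and choosing the probabilities so that the two contributions sum to at least $PS_i$ in expectation, gives the claimed distribution; all steps (computing each $TPS_i$ by the recursion, finding and decomposing the matching, and the faithful implementation) run in deterministic polynomial time.

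The step I expect to be the main obstacle is the coupling: I must certify that \emph{every} realization simultaneously (i) clears the $\frac{1}{2}TPS_i$ floor for \emph{all} agents at once and (ii) is Prop1, while the \emph{average} still hits $PS_i$ for every agent. The difficulty is that the large-item rotation (needed to lift expectations) and the small-item faithful implementation (needed for the per-realization floor and Prop1) must be correlated in a single distribution of support $n$, rather than treated independently. Getting the definition of ``large'' exactly right, so that removing the large items leaves each losing agent a residual small-item instance rich enough to guarantee half of her \emph{original} TPS, is the crux; this is precisely where the recursive/truncation structure of the TPS, together with the at-most-$(n-1)$ bound on over-proportional items, has to be used to keep the floor intact across all realizations.
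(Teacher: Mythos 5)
Your skeleton (rotate the big items through a matching, faithfully implement the rest, correlate the two) matches the paper's high-level plan, but two of the ideas that make the paper's proof work are missing, and one of the gaps is fatal to the half-TPS floor. First, you classify an item as ``large'' only when it is over-proportional ($v_i(j) > PS_i$), and you protect only agents who receive a large item, sending everyone else into a faithful implementation of the small items with an additive loss of one small item. But a ``small'' item in your sense can be worth almost $TPS_i$: take $n=2$ and an agent with item values $1, 1, 2-\epsilon$, so that $PS_i = TPS_i = 2-\tfrac{\epsilon}{2}$ and no item is over-proportional. Your argument then yields only the ex-post guarantee $PS_i - (2-\epsilon) = \tfrac{\epsilon}{2}$, far below $\tfrac{1}{2}TPS_i$; the one-item slack of Lemma~\ref{lem:faithful} is useless whenever the most valuable remaining item exceeds $\tfrac{1}{2}TPS_i$. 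The paper closes exactly this hole by (i) using the weaker threshold $v_i(j) \ge TPS_i$ (``exceptional'' items, of which there are at most $n$, not $n-1$) in Phase~1, and, more importantly, (ii) \emph{completing the matching} in Phase~2 so that every agent --- including every agent who received no exceptional item --- first gets one matched item $e_i$ satisfying $v_i(e_i) \ge \max_{j \in \items_3} v_i(j)$ (Proposition~\ref{pro:firstItem}), and only the leftover set $\items_3$ is faithfully implemented. The case analysis then closes: if $\max_{j\in\items_3} v_i(j) \le \tfrac{1}{2}TPS_i$ the one-item loss is affordable, and otherwise $v_i(e_i) > \tfrac{1}{2}TPS_i$ already clears the floor. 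Your construction has no counterpart to this per-agent safeguard item, and without it the ex-post floor cannot be certified.

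Second, the coupling you yourself flag as ``the main obstacle'' is left unresolved, and it is where the paper's technical work actually lies: LP1 introduces an auxiliary item $a_i$ of value exactly $TPS_i(\inst)$ as a placeholder whose promised value must later be redeemed, and then, for each realized matching, the completion $B^*$ is chosen as a maximum-weight matching with weights $w_i(j) = v_i(j)/(v_i(\items_2) - W_i)$; the optimality of $B^*$ forces $\sum_{i\in\agents_2} f_i \le 1$, which is precisely what makes LP3 feasible (Lemma~\ref{lem:Phase3}) and lets each $\agents_2$-agent recover $v_i(e_i) + f_i\, v_i(\items_3) \ge TPS_i(\inst)$ ex-ante, so that the LP1 expectation $s_i \ge PS_i$ survives the replacement of the placeholder. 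Your ``balanced fractional assignment'' of large items plus an independent ``ex-ante proportional fractional allocation of the small items'' does not obviously exist, since the small-item allocation must depend on which matching was realized (through $\items_3$ and the targets $f_i$). Finally, your claim that the large-item fractional assignment decomposes into at most $n$ integral matchings is unjustified --- Birkhoff-type decompositions need up to $f+1$ pieces, and the paper correspondingly ends up with $4n(m+1)$ allocations, reducing the support to $n$ only at the end via a basic feasible solution of a separate LP (Phase~\ref{step:support}); this part of your plan is fixable the same way, but as written it is incorrect.
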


Theorem~\ref{thm:BoB} is nearly the best possible {in the following senses}.
{First,} it is not possible to guarantee every agent value that is strictly larger than her proportional share ex-ante (e.g., if all agents have the same valuation function). 
{Second,} the highest possible fraction of the truncated proportional share that can be guaranteed ex-post is {at most} $\frac{n}{2n-1} = \frac{1}{2} + \frac{1}{4n-2}$ ({recall the example above with the $2n-1$ identical items}), which tends to half as $n$ grows large, and the theorem indeed ensures a fraction of half. We also remark that for the instance in Example~\ref{ex:notMMS}, while in the BoBW results from prior work \citep{FSV20,Aziz20} there is always an agent that gets {only a small} fraction of her MMS,
the algorithm of Theorem~\ref{thm:BoB} gives every agent her TPS (and her MMS) ex post. 

Another aspect in which Theorem~\ref{thm:BoB} cannot be improved is with respect to its Pareto properties. While the prior result of \citet{FSV20}  present a BoBW result (Theorem \ref{thm:FSV20}) with a distribution over allocations that is ex-ante fPO, our result does not give ex-ante fPO. 
We next show that if we want every agent to receive ex-post at least a constant fraction of her maximin share, getting the guarantee of ex-ante fPO is impossible. Moreover, this conflict between ex-ante fPO and half the MMS concerns every ex-post allocation that might potentially be in the support, not just one of them.


\begin{restatable}{proposition}{propEasyBoBW}
	\label{pro:noParetoExAnte}
	For every $n \ge 2$ and every $\epsilon > 0$ there are allocation instances with additive valuations, with the following property: for every ex-ante Pareto optimal (fPO) randomized allocation (whether ex-ante proportional or not), every allocation in its support does not give some agent more than a $\frac{1 + \epsilon}{n}$ fraction of her maximin share. 
\end{restatable}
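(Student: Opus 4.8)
The plan is to use the instance of Example~\ref{ex:notMMS} itself. Recall that there the maximin share of every agent equals $n$, so the target quantity $\tfrac{1+\epsilon}{n}\MMSi$ is exactly $1+\epsilon = v_i(s_i)$; hence it suffices to exhibit, for every candidate randomized allocation, an agent whose ex-post bundle is worth at most $v_i(s_i)$. Since any allocation $A$ in the support of a randomized allocation with fractional expectation $\bar x$ must satisfy $A_i \subseteq \{j : \bar x_{ij} > 0\}$, I would in fact prove the stronger claim that for every fPO fractional allocation $\bar x$, every integral allocation consistent with the support of $\bar x$ leaves some agent with value at most $1+\epsilon$. (We take $\epsilon$ small enough, e.g. $\epsilon \le n-1$, so that all values stay nonnegative; using any $\epsilon' \le \epsilon$ only strengthens the bound.)

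The key tool is the standard characterization of fractional Pareto optimality for additive valuations. Because the valuations are additive, the set of achievable utility profiles is a polytope, so a profile is (strongly) Pareto optimal if and only if it maximizes a strictly positive weighted sum; thus $\bar x$ is fPO iff there is $\lambda \in \mathbb{R}^n_{>0}$ with $\bar x$ maximizing $\sum_i \lambda_i v_i(\bar x_i)$, equivalently each item $j$ receives positive weight only from agents $i$ attaining $\max_k \lambda_k v_k(j)$. The crucial consequence is that the big items are symmetric across agents ($v_i(b_j)=n$ for all $i$), so the weighted value $\lambda_i v_i(b_j)=n\lambda_i$ is maximized exactly by the agents of maximum weight. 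Writing $M=\arg\max_k \lambda_k$, every big item is therefore allocated (fractionally) only to agents in $M$, so no agent outside $M$ ever holds a big item in any realization.

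The heart of the argument is to show that an agent $i\notin M$ can hold, among the small items, only her own item $s_i$. Indeed, for $j\ne i$ the weighted value of $s_j$ to $i$ is $\lambda_i(1-\tfrac{\epsilon}{n-1})$, whereas any max-weight agent $k\in M$ with $k\ne j$ has weighted value $\lambda_{\max}(1-\tfrac{\epsilon}{n-1}) > \lambda_i(1-\tfrac{\epsilon}{n-1})$ for $s_j$; and when $M=\{j\}$, agent $j$ herself strictly outbids $i$ on $s_j$. Hence $i$ is never a maximizer for $s_j$ and $\bar x_{i,s_j}=0$. Consequently every agent $i\notin M$ has, in every consistent realization, a bundle contained in $\{s_i\}$, of value at most $1+\epsilon$. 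If $M\ne\agents$, this already produces the required deprived agent.

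The remaining case is $M=\agents$, i.e. all weights equal. Then for each $j$ agent $j$ strictly maximizes $\lambda_j v_j(s_j)$, so the fractional allocation is diagonal on the small items ($\bar x_{i,s_j}>0 \Rightarrow i=j$); hence in any realization each agent can hold at most her own small item among the small items. Since there are only $n-1$ big items and $n$ agents, the pigeonhole principle gives an agent holding no big item, whose bundle is again contained in $\{s_i\}$ and worth at most $1+\epsilon=\tfrac{1+\epsilon}{n}\MMSi$. I expect the main obstacle to lie in the first two steps rather than the combinatorics: one must invoke the positive-weight (equivalently, MBB-price) characterization of fPO carefully, in particular the strict positivity of $\lambda$ that the polyhedral structure affords, and then verify the weighted-value inequalities that pin each below-top-weight agent to her own small item. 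Once those are in place, the conclusion follows immediately in both cases.
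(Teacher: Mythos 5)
Your proof is correct, and it reaches the conclusion by a genuinely different route than the paper. The paper argues locally, via a trading argument: for any fPO fractional allocation, if agent $i$ holds a fraction of any item $e_j \neq s_i$, then $s_i$ must be fully allocated to $i$, since otherwise $i$ could trade a sliver of $e_j$ (which she values weakly less than the current holder of $s_i$ does) for a sliver of $s_i$ (which she values strictly more), a fractional Pareto improvement; hence either some agent's fractional bundle is already contained in $\{s_i\}$, or every agent gets her $s_i$ ex-post and pigeonhole on the $n-1$ big items produces the deprived agent. You instead route through the global characterization of fPO by strictly positive welfare weights (each item allocated only to agents attaining $\max_k \lambda_k v_k(j)$), then do casework on the set $M$ of maximum-weight agents; your two cases ($M \neq \agents$ with a fixed deprived agent outside $M$, versus $M = \agents$ with a diagonal small-item allocation plus pigeonhole) are each verified correctly, and your observation that any support allocation satisfies $A_i \subseteq \{j : \bar x_{ij} > 0\}$ is exactly the bridge the paper also uses implicitly. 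The one nontrivial ingredient you lean on --- that for a polytope of utility profiles, strong Pareto optimality is equivalent to maximizing some \emph{strictly} positive weighted sum --- is true and standard: maximize $\sum_i u_i$ over $\{u \in U : u \ge u^*\}$ (a set equal to $\{u^*\}$ by Pareto optimality) and LP duality yields multipliers $z \ge 0$ with weights $\lambda = \mathbf{1} + z > 0$, so the strict positivity you flagged as the main obstacle does hold. The trade-off: the paper's exchange argument is elementary and self-contained, avoiding the characterization lemma entirely and handling all cases uniformly, while your approach imports heavier machinery but buys more structural information (it pins down, as a function of the weights, exactly which agents may touch which items) and generalizes more systematically to other instances where one wants to reason about all fPO points at once.
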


The proof of Proposition~\ref{pro:noParetoExAnte} is based on Example~\ref{ex:notMMS}, and is given in Appendix \ref{app:proofs}. 


We thus see that we cannot hope to improve our result to also guarantee  ex-ante fPO. How about the weaker condition of ex-post PO?
The polynomial time algorithm referred to in Theorem~\ref{thm:BoB} does not necessarily produce Pareto efficient allocations. However, the existential result in the theorem does hold simultaneously with a Pareto efficiency requirement, for the simple reason that ex-post replacement of an allocation by an allocation that Pareto dominates it cannot reduce the received fraction of the (ex-post) truncated proportional share (and ex-ante proportional share) of any of the agents. It is not clear whether this reallocation can be done in polynomial time.  (For NP-hardness results associated with Pareto efficient reallocation, see~\citep{dKBKZ09,ABLLM16}.)

\begin{corollary}
	\label{cor:BoB}
	For every allocation instance with additive valuations, there is a randomized allocation that is supported on at most $n$ allocations, is ex-ante proportional, and ex-post it gives every agent at least half of her TPS (and hence also at least half her MMS), as well as being ex-post PO.\footnote{The improvement to a PO allocation might not maintain the Prop1 property, yet each agent's value never decreases under that improvement.  } 
\end{corollary}

	\subsection{Additional Related Work}
	
{The maximin share was introduced by \citet{Budish11} as a relaxation of the proportional share. \citet{KurokawaPW18} showed that for agents with additive valuations, an allocation that gives each agent her MMS may not exist.
A series of papers
[\citealp{KurokawaPW18}, \citealp{amanatidis2017approximation}, \citealp{BK20}, \citealp{GhodsiHSSY18}, \citealp{garg2019approximating} ,\citealp{GT20}]
considered 
the best fraction of the MMS that can be concurrently guaranteed to all agents, and the current state of the art (for additive valuations) is a {$\frac{3}{4} +\Theta(\frac{1}{n}) $-fraction} 
of the MMS {(whereas there are instances in which more than a $\frac{39}{40}$ fraction of the MMS cannot be achieved~\citep{FST21})}.
For the case of arbitrary (non-equal) entitlements,
\citet{babaioff2021fair} define a share named the \emph{AnyPrice share (APS)}, which is 
the value the agent can guarantee herself whenever her budget is set to her entitlement  $b_i$ (when $\sum_i b_i=1$) and she buys her highest value affordable set when items are {adversarially} priced with a total price of $1$.  
To approximate the APS, they 
extend our definition of the TPS to the case of unequal entitlements.
For additive valuations they show that $TPS_i\geq APS_i 
$ and that the inequality is strict for some instances.

}

The fairness notion of 
Prop1 was introduced by \citet{conitzer2017fair}.
The fairness notion of EF1 was implicitly used in \citep{LMMS04}, and was formally defined in \citep{Budish11}. EFX (envy-free up to any good) was introduced in
\citep{CKMPSW19} 
The notion of envy-free up to one good more-and-less ($EF^1_1$) was defined in \citep{BarmanK19}, relaxing EF1.

For a subclass of additive valuations, that of additive {\em dichotomous} valuations, very strong BoBW results are known~\citep{HPPS20,Aziz20,BEF20}, which among other properties, are EF ex-ante, EFX ex-post, maximize welfare, and the underlying allocation mechanism is universally truthful. Such a strong combination of results is impossible to achieve for general additive valuations. 
In particular, the results of \citet{ABCM2017} imply that  every universally truthful randomized allocation mechanism for two agents that allocates all items must sometimes not give an agent more than a $\frac{2}{m}$ fraction of her MMS ex-post. 
{See Section~\ref{sec:truthful} for a more extensive discussion on truthfulness.}

In Section~\ref{sec:previousBoBW} we already discussed some previous BoBW results.
We further remark that in \citep{FSV20} they present an instance for which there is no randomized allocation that is ex-ante proportional, ex-post EF1 and ex-post fPO. For the same instance, there is no randomized allocation that is ex-ante proportional, is ex-post fPO, and gives every agent a positive fraction of her MMS.
	
	\section{Preliminaries}
	\label{sec:preliminaries}

We consider fair allocation of indivisible items to agents with additive valuations.
An \emph{instance} $\inst=(v,\items,\agents)$ of the fair allocation problem consists of a set $\items$ of $m$ indivisible items, a set $\agents$ of $n$ agents, and vector $v=(v_1,v_2,\ldots,v_n)$ of non-negative additive valuations, with the valuation of  agent $i\in \agents$ for set $S\subseteq \items$ being $v_i(S) = \sum_{j \in S} v_i(j)$, where $v_i(j)$ denotes the value of agent $i$ for item $j\in \items$.
We assume that the valuation functions of the agents are known to the social planer, and that there are no transfers (no money involved). We further assume that all agents have equal entitlement to the items.
An allocation $A$ is a collection of $n$ disjoint bundles  $A_1, \ldots, A_n$ (some of which might be empty), where $A_i \subseteq \items$ for every $i\in \agents$. 

As we shall be dealing with randomized allocations, let us introduce terminology that we shall use in this context. A \emph{random allocation} is a distribution $D$ over integral allocations $A^1, A^2, \ldots$. It induces an {\em expected} allocation $A^*$, where $A^*_{ij}$ specifies for agent $i$ and item $j$ the probability that agent $i$ receives item $j$, when an allocation is chosen at random from the underlying distribution $D$. These probabilities can be interpreted as fractions of the item that an agent receives ex-ante. Hence the expected allocation $A^*$ can be viewed as a \emph{fractional allocation}, in which items are divisible. Conversely, we say that the distribution $D$ (namely, the random allocation) {\em implements} the fractional allocation $A^*$ {when the expectation of $D$ is $A^*$}. 
Finally, we note that an additive valuation function can be extended in a natural way from allocations to fractional allocations, by considering the expected valuation. 
That is, the value of a $p_j$ fraction of item $j$ to agent $i$ to is $p_j\cdot v_i(j)$, and the value of a fractional allocation $A^*$ to agent $i$ is $\sum_{j\in \items} A^*_{ij}\cdot v_i(j)  $. 

For the issue of computing randomized allocations there are two different notions of polynomial time computation. 
In a \emph{random polynomial time implementation}, there is a randomized polynomial time algorithm that samples an allocation from the distribution $D$. In a \emph{polynomial time implementation}, there is a deterministic polynomial time algorithm that lists all allocations in the support of $D$ (implying in particular that the support contains at most polynomially many allocations), together with their associated probabilities. 


\OLD{
\subsection{Fairness and Efficiency Properties}
\mbc{I think we can remove this completely. Uri, is there anything you want to keep from here?}

We now briefly review properties of allocations that are most relevant to the current work. 


====================

Discuss fairness.

Envy freeness. I get at least as much as the best of the others. Always proportional. Partial converse: the uniform fractional allocation is EF, and hence every proportional allocation Pareto dominates an EF allocation. EF not preserved under Pareto improvements.

Above definitions impose constraints on allocations, but do not dictate a specific allocation. They allow for optimizing secondary goals subject to these constraints.

Solution based (up to tie breaking): maximum fractional NSW (always PO). 

There are other fairness notions (such as competitive equilibrium, egalitarian (Lex-min)), but as they are not used in this work, we do not define them.

There are also group fairness notions, but they will not be discussed in this work.

EF relaxed either to EF1 or to EFX. Both imply Prop1. EFX is better as it implies also $\frac{n}{2n-1}$ of TPS (and this is why the algorithm of~\citep{BK20} provides a $\frac{n}{2n-1}$ approximation to TPS, whereas EF1 does not. We shall also encounter $EF_1^1$.

fNSW changes to maximum integral NSW.

an allocation $A$ is Pareto efficient if there is no other allocation $B$ that Pareto dominates it. Allocation $B$ \emph{Pareto dominates} $A$ if every agent weakly prefers $B$ over $A$  ($v_i(B_i) \ge v_i(A_i)$ for all $i$), and at least one agent strictly prefers $B$ over $A$ ($v_i(B_i) > v_i(A_i)$ for some $i$). The fairness requirement uses the notion of {\em shares}, where a share of an agent is a translation of her entitlement and valuation function to a value. The {\em proportional share } of agent $i$ with valuation $v_i$ is defined as $\PSi = \frac{1}{n}v_i(\items)$. The truncated proportional share of agent $i$ is equal to her proportional share if there is no {\em exceptional} item, where an item $j$ is exceptional if $v_i(j) > \frac{1}{n}v_i(\items)$. However, if there are exceptional items, then they contribute to raising the proportional share, but do not contribute to raising the truncated proportional share.

} 

\subsection{Faithful Implementation}\label{sec:faithful-imp}
For general additive valuations, there is a very useful lemma that greatly simplifies the design of BoBW allocations. We refer to it here as the {\em faithful implementation lemma}. The lemma (sometimes with slight variations) was previously stated and used in BoBW results~\citep{BCKM13, FSV20, HPPS20, Aziz20}, and was used even earlier in approximation algorithms for maximizing welfare~\citep{Srinivasan08}. Restricted variants of it were introduced for scheduling problems~\citep{LST90}, and were later used for allocation problems~\citep{BD05}. For an extensive discussion of the faithful implementation lemma, as well as its proof (presented for completeness), see Appendix~\ref{sec:faithful}. 

\begin{restatable}{lemma}{lemmaFaithful}	
\label{lem:faithful}
Let $A^*$ be a fractional allocation of $m$ items to $n$ agents with additive valuations, and let $f$ denote the number of strictly fractional variables in $A^*$ (number of pairs $(i,j)$ such that in $A^*$, the fraction of item $j$ allocated to agent $i$ is strictly between 0 and 1). Then there is a deterministic polynomial time implementation of $A^*$, supported only on allocations in which every agent gets value (ex-post) equal her ex-ante value (in the fractional allocation $A^*$), up to the value of one item. (For agent $i$, the corresponding one item is the item most valuable to $i$, among those items that are assigned to $i$ under $A^*$ in a strictly fractional fashion. {Moreover, the values that the agent gets in any two allocations differ by at most the value of this single item.}) The distribution of the implementation is supported over at most $f + 1$ allocations. 
\end{restatable}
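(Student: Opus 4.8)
The plan is to prove the lemma by induction on $f$, the number of strictly fractional variables, peeling off a single integral allocation at each step so that the support grows by exactly one whenever $f$ drops; this is what yields the bound of $f+1$ on the number of allocations. The natural object to work with is the bipartite \emph{fractional graph} $H$ whose two sides are the agents $\agents$ and the items $\items$, with an edge $(i,j)$ precisely for each pair with $A^*_{ij} \in (0,1)$; items that are integrally assigned (fraction $0$ or $1$) are frozen and play no role. Since all integral variables are fixed once and for all, every allocation we produce automatically agrees with $A^*$ on them, so the whole analysis reduces to rounding the edges of $H$, and the base case $f=0$ is the single integral allocation $A^*$ itself.

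The first conceptual step is to eliminate cycles. If $H$ contains a cycle, it is even (as $H$ is bipartite), so its edges split into two alternating classes; increasing one class by $\epsilon$ and decreasing the other by $\epsilon$ leaves every row-sum and column-sum of $A^*$ unchanged, because each vertex on the cycle carries exactly one edge of each class. Pushing $\epsilon$ to its two feasibility limits produces two fractional allocations, each with strictly fewer fractional variables and with $A^*$ lying on the segment between them. Once $H$ is a forest I process a leaf: an agent-leaf holds a single fractional edge that I round by shifting its mass, and the only agents whose value changes along the rounded path are its two endpoints, each of which is an agent carrying a single fractional edge. The crucial bookkeeping is therefore that every rounding step moves an affected agent's value by at most the value of one of \emph{her own} fractional items. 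Each step is a cycle or path search in a bipartite graph, and there are at most $f \le nm$ steps, giving the claimed deterministic polynomial time.

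The main obstacle is to reconcile three requirements that pull against each other: the expectation of the distribution must equal $A^*$ exactly (so ex-ante values are matched), every realized allocation must lie within one item of $A^*$ for every agent \emph{simultaneously} (not merely within one item of the current residual), and the support must stay at $f+1$. The tension is that the transparent way to preserve the expectation---branching into the two extreme moves along a cycle or path---would double the support at every step and give an exponential bound, whereas a Birkhoff--von-Neumann-style peeling of a single integral allocation per step keeps the support linear, since $|\mathrm{supp}| \le 1 + (f'+1) \le f+1$ when the residual has $f' \le f-1$ fractional variables. I would resolve the conflict by folding the cycle and leaf moves into the choice of \emph{one} integral allocation $B$ to peel, with $A^* = pB + (1-p)A'$ where $A'$ is feasible and has fewer fractional variables, and by maintaining the invariant that each agent's set of fractional items only shrinks through the recursion and that her realized fractional value always stays within a single $A^*$-determined interval of width at most her most valuable fractional item. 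The forest structure guarantees that each step charges an agent at most one of her own fractional items, and telescoping these steps along the recursion is exactly what must be verified to obtain both the ``up to one item'' guarantee and the ``moreover'' claim that any two realizations differ, per agent, by at most that single item.
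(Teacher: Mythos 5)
There is a genuine gap, and it sits exactly where you flag it yourself: the telescoping of per-step ``one item'' bounds into a global one-item spread is not merely ``to be verified'' --- it fails without a structural idea that your proposal lacks. Two concrete problems. First, your claim that when you round along a path in the forest ``the only agents whose value changes \ldots are its two endpoints'' is false: an interior agent on an alternating path has one incident edge increased by $\epsilon$ and one decreased by $\epsilon$, so her total fractional \emph{mass} is preserved but her \emph{value} changes by $\epsilon\,\bigl(v_i(j_1)-v_i(j_2)\bigr)$. Second, even granting a per-step bound of one fractional item per affected agent, these bounds accumulate: an agent can be touched by many peeling steps, charged against different items $j_1, j_2, \ldots$, so the spread between two realizations in the final support can approach $v_i(j_1)+v_i(j_2)+\cdots$ rather than $\max_{j\in M^f_i} v_i(j)$, where $M^f_i$ is her set of fractionally held items. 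The invariant you posit (each agent's realized value confined to a fixed $A^*$-determined interval of width one item) is precisely the lemma's hard content, and nothing in the cycle-cancelling and leaf-peeling machinery enforces it, because the integral allocation $B$ peeled at each step is constrained by all agents simultaneously and may hand a given agent an arbitrary subset of her fractional items in one realization and none of them in another.

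The paper's proof supplies the missing idea, which is orthogonal to your graph-theoretic machinery: sort each agent's fractionally held items in decreasing value and split her fractional mass into consecutive unit quantiles (``clones''), turning the problem into a fractional bipartite matching between clones and items; the Birkhoff--von Neumann decomposition then yields an implementation in which every clone receives exactly one item from within its own value-contiguous quantile. This gives a global sandwich $v_i(S_{i,\min}) \le v_i(A_i) \le v_i(S_{i,\max})$ with $v_i(S_{i,\max}) - v_i(S_{i,\min}) \le \max_{j \in M^f_i} v_i(j)$, via a shifting argument (clone $k$'s worst possible item is at least as valuable as clone $k{+}1$'s best), which is exactly the small-spread property --- no telescoping of local steps is needed. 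The $f+1$ support bound is then obtained modularly rather than by peeling: take any polynomial-support implementation $D$, and solve an LP with one variable per support allocation, with constraints fixing the marginal $A^*_{ij}$ on each of the $f$ fractional pairs plus a normalization constraint; a basic feasible solution has at most $f+1$ nonzero variables, and since the marginals (hence ex-ante values) are preserved while the support only shrinks, every ex-post guarantee survives. If you wish to salvage your induction, each peel would have to respect the quantile structure (peel only matchings assigning each clone an item from its own quantile), at which point you have essentially reconstructed the paper's argument.
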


Using Lemma~\ref{lem:faithful}, one trivially gets the following BoBW result (implicit in previous work), which is a  baseline against which other BoBW results can be compared. 

\begin{proposition}
\label{pro:uniform}
There is a deterministic polynomial time implementation of a fractional allocation that is ex-ante envy free, and the implementation is supported on allocations that are (ex-post) Prop1.
\end{proposition}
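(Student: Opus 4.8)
The plan is to apply the faithful implementation lemma (Lemma~\ref{lem:faithful}) to the uniform fractional allocation. Let $U$ be the fractional allocation with $U_{ij} = \frac{1}{n}$ for every agent $i$ and item $j$; since $\sum_i U_{ij} = 1$ for every $j$, this is a valid fractional allocation of all of $\items$. Under $U$, agent $i$'s own fractional bundle is worth $\sum_{j} \frac{1}{n} v_i(j) = \frac{v_i(\items)}{n} = \PSi$, and agent $i$ values the fractional bundle of every \emph{other} agent at exactly the same amount; hence $U$ is (trivially) ex-ante envy free, and in particular ex-ante proportional. I would then invoke Lemma~\ref{lem:faithful} on $U$. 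Every variable $U_{ij} = \frac{1}{n}$ is strictly fractional (assuming $n \ge 2$; the case $n=1$ is trivial), so $f = nm$ and the lemma produces a deterministic polynomial-time implementation of $U$ supported on at most $nm+1$ allocations, in each of which the ex-ante value $\PSi$ of every agent is preserved up to the value of a single item.

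It then remains to show that each allocation $A$ in the support is ex-post Prop1. Fix an agent $i$ and her ex-post bundle $A_i$. By the definition of Prop1 we must establish $v_i(A_i) \ge \PSi - \max_{j \in \items \setminus A_i} v_i(j)$, i.e. any shortfall of agent $i$ below her proportional share must be accounted for by a single item she does \emph{not} hold. The faithful implementation lemma guarantees that $v_i(A_i)$ deviates from its ex-ante value $\PSi$ by at most the value of one fractionally-assigned item; under $U$ every item is fractionally assigned to $i$, so whenever $v_i(A_i) < \PSi$ this shortfall is at most the value of some item that was rounded away from $i$, that is, an item of $\items \setminus A_i$. This yields $v_i(A_i) \ge \PSi - \max_{j \in \items\setminus A_i} v_i(j)$, which is exactly Prop1.

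The delicate step, and the one I would treat most carefully, is this last implication. As literally stated, Lemma~\ref{lem:faithful} bounds the deviation by the most valuable item that is strictly fractional for $i$, which under $U$ is simply $i$'s globally most valuable item; in principle that item could lie inside $A_i$, in which case the literal bound $v_i(A_i) \ge \PSi - \max_{j\in\items} v_i(j)$ is too weak to yield Prop1 (one can check it only gives $v_i(A_i)\ge \PSi/2$). To close this gap I would appeal to the finer structure of the rounding underlying the lemma (the Birkhoff--von-Neumann / dependent-rounding construction used in its proof in Appendix~\ref{sec:faithful}): when an agent ends up below her fractional value, the loss is created by fractional items that were rounded \emph{away} from her, so the single charged item can always be taken to be one she did not receive. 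Under $U$ this charged item lies in $\items \setminus A_i$, and the argument of the previous paragraph goes through. Everything here runs in deterministic polynomial time, inherited from Lemma~\ref{lem:faithful}, with support of size at most $nm+1$.
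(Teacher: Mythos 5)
Your proof follows exactly the paper's route: the paper's own (three-line) proof of this proposition likewise takes the uniform fractional allocation, notes it is ex-ante envy free because all agents receive identical fractional bundles, and invokes Lemma~\ref{lem:faithful} to get a deterministic polynomial-time implementation supported on Prop1 allocations. The subtlety you flag in your last paragraph --- that the lemma as literally stated charges the deviation to agent $i$'s most valuable strictly-fractional item, which under the uniform allocation could lie inside $A_i$, whereas Prop1 demands an item of $\items \setminus A_i$ --- is a genuine gap in the paper's one-line argument, and your resolution via the block structure of the eating-based rounding (when agent $i$ keeps her top item, the shortfall telescopes to at most the value of her second-ranked item, which is necessarily outside $A_i$ since each clone receives one item per block) is correct and makes the argument strictly more careful than the paper's.
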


\begin{proof}
Consider the {\em uniform fractional allocation}, that assigns a fraction of $\frac{1}{n}$ of every item to every agent. It is ex-ante envy free, as all agents get the same fractional allocation. Applying Lemma~\ref{lem:faithful}, it is implemented in deterministic polynomial time by allocations that are Prop1.
\end{proof}

	\section{Main Result: The Best of Both Worlds}
		\label{sec:proofs}

We now restate and prove our main result, Theorem \ref{thm:BoB}.

\thmBoBW*

In Section~\ref{sec:discussion} we discuss possible extensions of the theorem with regard to three aspects: fairness, efficiency and truthfulness. We present impossibilities of some natural extensions, as well as some open problems. 
 
\subsection{Proof Overview}
Let  $\inst=(v,\items,\agents)$ be an input instance. 
For any instance $\inst$ we denote  the proportional share and the truncated proportional share of agent $i$ by $\PSi(\inst)$ and $\TPSi(\inst)$ respectively.\footnote{Note that $\PSi$ and $\TPSi$ depend on $v_i$, but not on the valuations of the other agents.    }
For the original instance, we omit the instance and denote the proportional share and the truncated proportional share of agent $i$ by  $\PSi$ and $\TPSi$, respectively.

To prove the theorem we present a deterministic polynomial time algorithm that, given the input instance  $\inst=(v,\items,\agents)$, 
computes an implementation of a randomized allocation that gives every agent at least her proportional share ex-ante, and at least half of her truncated proportional share ex-post, 
and is supported on at most $n$ allocations.
Items that each by itself gives an agent her TPS will play a central rule in our algorithm. 
We say that item $j$ is {\em  exceptional} for agent $i$ if $v_i(j) \ge TPS_i$. 
Our algorithm has several phases:
\begin{enumerate}
	\item Find a distribution over $4n$ matchings. Each of these matchings partitions the agents to two disjoint sets $\agents_1$ and $\agents_2$, and the items to three disjoint sets $\items(\agents_1), \items(\agents_2)$ and $\items_3$ ($\items = \items(\agents_1)\cup  \items(\agents_2) \cup \items_3$, $|\agents_1|= |\items(\agents_1)|$ and $|\agents_2|= |\items(\agents_2)|$). Each agent in $\agents_1$ is matched with an item in $\items(\agents_1)$, and each agent in $\agents_2$ is matched with an item in $\items(\agents_2)$.
	The distribution over these matchings 
	is computed in two steps:
\begin{enumerate}
	\item \label{step:match-exceptional}	Compute (using LP1) a distribution in which in every matching, 
	every agent in $\agents_1$ is matched to an item that is  exceptional item for him in $\items(\agents_1)$ and such that:
	\begin{enumerate}
		\item No unallocated items (items in $\items\setminus \items(\agents_1)$) is  exceptional to any agent in $\agents_2=\agents\setminus \agents_1$.
		\item The distribution over these $4n$ matchings 
		gives each agent her proportional share conditioned on every agent in $\agents_2$ eventually getting her TPS in expectation (as indeed is guaranteed by \ref{step:1b2} below). 
	\end{enumerate}
	\item \label{step:match-others} Complete each partial matching to a complete matching by matching each agent in $\agents_2$ to an item in $\items(\agents_2)$,
	such that:
	\begin{enumerate}
		\item  Each agent prefers the item matched to him over any unmatched item (item in $\items_3$).
		\item \label{step:1b2}  For the unmatched items, there still is a fractional allocation of $\items_3$ such that for each agent in $\agents_2$, her expected value for the combination of her matched item and her fractional allocation is at least her TPS.
	\end{enumerate}    
\end{enumerate}

\item \label{step:LP3} For each matching above,
find (by LP3) a distribution over $m+1$ deterministic allocations that allocate $\items_3$, the unmatched items,  with the following properties:
\begin{enumerate}
	\item in each allocation in the support, every agent in $\agents_2$ gets (in total, over the matched item and the remaining allocation) at least half her TPS.
	\item In expectation, every agent in $\agents_2$ gets (in total) her TPS.
\end{enumerate}

\item \label{step:support} From the distribution over $4n(m+1)$ allocations defined above, find a distribution over at most $n$ of these allocations that is ex-ante proportional (all ex-post properties are preserved).  

\end{enumerate}

Before elaborating on these steps we make the following remark.

\begin{remark}
	Below we present an allocation algorithm with properties as in the theorem. 
	Some components in this algorithm are flexible. In particular, this applies to the objective functions of LP1 and LP3. Making use of this flexibility, we set the objective functions of these LPs so that they maximize welfare (subject to the constraints of the respective LPs). We find it natural to measure welfare in units of ``proportional share". Hence we assume (without loss of generality) that in $\inst$ the valuation function of every agent $i$ is scaled so that $v_i(\items) = n$. Consequently, $PS_i(\inst) = 1$ (and $TPS_i(\inst) \le 1$). This assumption is not used in the proof of Theorem~\ref{thm:BoB}, and is presented in this remark only so as to explain our particular choice of objective functions for LP1 and LP3. 
\end{remark}

\subsection{The Algorithm and the Proof}
We next move to formally describe all the steps of the algorithm and prove the theorem. 

{\bf Phase \ref{step:match-exceptional}:} Maximal allocation of  exceptional items. 

We start by transforming the input instance $\inst$ into a new instance $\inst_1$.
In $\inst_1$, we add $n$ auxiliary items to $\items$, and denote them by $a_1, \ldots, a_n$, thus obtaining a set $\items_1 =  \items \cup \{a_1, \ldots, a_n\}$. For every $i\in \agents$, we modify the original valuation function $v_i$ to the following {\em unit demand} valuation function $u_i$. 

\begin{itemize}
	
	\item For every item $j \in \items$, if $v_i(j) \ge \TPSi(\inst)$ then $u_i(j) = v_i(j)$.     
	
	\item For every item $j \in \items$, if $v_i(j) < \TPSi(\inst)$ then $u_i(j) = 0$.
	
	\item $u_i(a_i) = \TPSi(\inst)$. 
	
	\item $u_i(a_j) = 0$ for $j \not= i$.
	
	\item $u_i$ is {\em unit demand}. Namely, for $u_i(S) = \max_{j\in S} u_i(j)$ for every $S \subset \items_1$. 
	
\end{itemize}

We now set up a linear program that finds a fractional allocation that maximizes welfare in $\inst_1$, subject to the constraint that the fractional value received by every agent $i$ is at least $PS_i(\inst)$ (hence at least~1, due to our scaling). Variable $x_{ij}$ denotes the fraction of item $j$ received by agent $i$. Variable $s_i$ denotes the value that agent $i$ derives from the fractional allocation. We refer to the following linear program as LP1.

{\bf Maximize $\sum_{i\in \agents} s_i$ subject to:}

\begin{enumerate}
	
	\item $\sum_{i\in \agents} x_{ij} \le 1$ for every item $j\in \items$. (Every item is fractionally allocated at most once.)
	
	\item $\sum_{j\in \items_1} x_{ij} = 1$ for every agent $i\in \agents$. (Agent $i$ gets item fractions that sum to one item).
	
	\item $s_i = \sum_{j\in \items_1} u_i(j) x_{ij}$ for every agent $i\in \agents$. (Agent's $i$ value is the sum of fraction of values that she receives from the fractional allocation.)
	
	\item $s_i \ge PS_i(\inst)$ for every agent $i\in \agents$. (Agent's $i$ value is at least as high as $PS_i(\inst)$.)

	\item $x_{ij} \ge 0$ for every agent $i\in \agents$ and item $j\in \items$. 
	
\end{enumerate}

\begin{proposition}
	LP1 is feasible. 
\end{proposition}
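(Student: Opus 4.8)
The plan is to prove feasibility constructively, by writing down one explicit fractional allocation and checking the constraints, rather than invoking any abstract existence result. The construction I have in mind gives every agent $i$ a $\frac{1}{n}$ fraction of each of her exceptional items that has positive value, and places all of her remaining mass on her own auxiliary item $a_i$. Writing $E_i = \{ j \in \items : v_i(j) \ge \TPSi(\inst),\ v_i(j) > 0 \}$ and $k_i = |E_i|$, I set $x_{ij} = \frac{1}{n}$ for every $j \in E_i$, set $x_{i a_i} = 1 - \frac{k_i}{n}$, and set every other variable to $0$. Since only agent $i$ places mass on $a_i$, and $a_i$ carries no capacity constraint, the only constraints that require real checking are the item-capacity constraint, the unit-mass constraint, and the value lower bound.

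The capacity constraint and the value bound are the easy part. Each item is exceptional for at most $n$ agents, so $\sum_{i} x_{ij} \le \frac{n}{n} = 1$, giving constraint~1. For the value, I would use the defining identity of the TPS evaluated at $t = \TPSi(\inst)$, namely $\sum_{j \in \items} \min(v_i(j), \TPSi(\inst)) = n\cdot \TPSi(\inst)$. Splitting the sum according to whether $v_i(j) \ge \TPSi(\inst)$ shows $\sum_{j \in E_i} v_i(j) = v_i(\items) - (n - k_i)\,\TPSi(\inst)$. Substituting this into $s_i = \frac{1}{n}\sum_{j \in E_i} v_i(j) + \TPSi(\inst)\bigl(1 - \tfrac{k_i}{n}\bigr)$, the two occurrences of $\TPSi(\inst)$ cancel and leave $s_i = \frac{v_i(\items)}{n} = \PSi(\inst)$, so constraint~4 holds with equality. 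This cancellation is really the conceptual core: a $\frac{1}{n}$ share of the exceptional items contributes exactly the value gap $\PSi(\inst) - \TPSi(\inst)$ that the auxiliary item (worth only $\TPSi(\inst)$) fails to cover.

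The step I expect to be the main obstacle is the unit-mass constraint~2, which reduces to checking that $x_{i a_i} = 1 - \frac{k_i}{n} \ge 0$, i.e. that no agent has more than $n$ positive-value exceptional items. When $\TPSi(\inst) > 0$ this drops out of the same identity: every one of the $k_i$ exceptional items contributes exactly $\TPSi(\inst)$ to the sum $\sum_j \min(v_i(j), \TPSi(\inst)) = n\,\TPSi(\inst)$, while all remaining terms are nonnegative, so $k_i \cdot \TPSi(\inst) \le n\cdot \TPSi(\inst)$ and hence $k_i \le n$. The delicate case is $\TPSi(\inst) = 0$, where this counting argument is vacuous; here I would argue separately that at most $n$ items have positive value, since otherwise the function $h(t) = \frac{1}{n}\sum_j \min(v_i(j), t)$ has slope greater than $1$ at the origin, so $h(t) > t$ for small $t > 0$, while $h(t) < t$ for large $t$. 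By continuity $h$ would then have a strictly positive fixed point, forcing $\TPSi(\inst) > 0$, a contradiction.

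Assembling these observations, the displayed allocation is nonnegative, allocates unit mass to each agent, respects every item's capacity, and attains $s_i = \PSi(\inst)$ for all $i$; hence it is feasible for LP1. As a byproduct the argument shows that the optimal value of LP1 is at least $\sum_{i\in\agents}\PSi(\inst)$, which I expect to be convenient in the later phases.
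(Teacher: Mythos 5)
Your proposal is correct and takes essentially the same route as the paper: the identical explicit solution (a $\frac{1}{n}$ fraction of each exceptional item plus the remaining mass on the auxiliary item $a_i$), with constraint~4 verified via the TPS fixed-point identity, which the paper uses in the equivalent form $\TPSi(\inst) = \frac{v_i(\items \setminus E_i)}{n - |E_i|}$ to get $s_i = \PSi(\inst)$ exactly. Your separate treatment of the degenerate case $\TPSi(\inst) = 0$ (restricting $E_i$ to positive-value items so that $k_i \le n$ still holds) is a bit of extra care on a point the paper dispatches with the terse remark that $|E_i| \le n$, not a genuinely different argument.
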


\begin{proof}
	Let $E_i$ denote the set of items that are  exceptional for agent $i$ in the original instance $\inst$. {As there cannot be more than $n$ items worth more than the proportional share, we have that $|E_i| \le n$.} 
	Consider a solution for LP1 with $x_{ij} = \frac{1}{n}$ for every $i \in \agents$ and $j\in E_i$, and $x_{ia_i} = 1 - \sum_{j \in E_i} x_{ij}
	=1-\frac{|E_i|}{n}$. It clearly satisfies constraints~1,2,3 and~5. 
	In remains to establish that this solution  satisfies constraint~4, that is, the constraint $s_i \ge PS_i(\inst)$. 
	
	We shall use the facts that $PS_i(\inst) = \frac{1}{n}v_i(E_i) + \frac{1}{n}v_i(\items \setminus E_i)$ and $TPS_i(\inst) = \frac{v_i(\items \setminus E_i)}{n - |E_i|}$. 
	We can see that constraint~4 is satisfied by this solution: 
		$$s_i = \frac{1}{n}u_i(E_i) + \left(1 - \frac{|E_i|}{n}\right)u_i(a_i) =  \frac{1}{n}v_i(E_i) +  \frac{n - |E_i|}{n}TPS_i =  \frac{1}{n}(v_i(E_i) + v_i(\items \setminus E_i)) =PS_i(\inst) $$	
\end{proof}

We solve LP1. Let $A^*$ denote the optimal fractional solution that is found. 
We assume without loss of generality that ties are broken toward real items: In  $A^*$ there is no agent $i$ that is fractionally allocated her auxiliary item $a_i$ ($x_{ia_i}>0$) such that there is some real item $j\in \items$ with value $v_i(j)=\TPSi(\inst)$ that is not fully allocated (it is easy to make sure that is the case by shifting  mass from the auxiliary item to such a real item if not.) Moreover,
without loss of generality, in $A^*$ there are at most $2n-1$ items that are (fractionally) allocated.
(By constraint~2 the sum of allocated fractions is exactly $n$. Hence either there are $n$ items that are fully allocated, or there are fewer than $n$ fully allocated items. In the latter case, and only in the latter case, in addition there are items that are partly allocated. The number of partly allocated items need not exceed $n$, because no agent needs to receive fractions from two different partly allocated items. The agent can instead gradually increase her share in the more valuable of the two partly allocated items, and reduce her share in the other. This gradual increase stops when the first of the following two events happens: either constraint~1 becomes tight for the more valuable item, and then the item becomes fully allocated instead of partly allocated, or the share of the agent in the less valuable item becomes~0, and then the agent receives fractions from one less item.) Consequently, constraint~1 involves at most $2n-1$ items, and constraints~2,~3 and~4 contribute $3n$ additional constraints. As all $n$ variables $s_i$ are positive, the number of positive $x_{ij}$ variables is at most $4n-1$.

We perform faithful randomized rounding on $A^*$. The following proposition follows immediately from the properties of $A^*$ and Lemma~\ref{lem:faithful}, and hence its proof is omitted.

\begin{proposition}
	\label{pro:firstFaithful}
	The faithful randomized rounding of $A^*$ produces a distribution over allocations with the following properties:
	
	\begin{enumerate}
		
		\item In every allocation, every agent $i$ gets exactly one item from $\items_1$. This item is either one of her  exceptional items, or her auxiliary item $a_i$. {In either case the $u_i$ value of that item is least $TPS_i(\inst)$.} 
		
		\item The distribution is supported on at most $4n$ allocations. 
		
		\item In expectation, every agent $i$ gets value $s_i$ with respect to $u_i$. Recall that $s_i \ge PS_i(\inst)$. 
		
	\end{enumerate}
	
\end{proposition}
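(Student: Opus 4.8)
The plan is to read off the three properties from the combinatorial structure of the optimal LP1 solution $A^*$ together with the guarantees of Lemma~\ref{lem:faithful}, isolating the first property as the one that carries the real content. First I would record that constraint~2 forces $\sum_{j \in \items_1} x_{ij} = 1$ for every agent while constraint~1 forces $\sum_{i} x_{ij} \le 1$ for every item, so $A^*$ is a point of the bipartite assignment polytope in which every agent has degree exactly one and every item degree at most one. This polytope is integral, and its integral points are exactly the assignments that give each agent one item; the faithful implementation of Lemma~\ref{lem:faithful} preserves the (integral) per-agent degree while integralizing the fractional edges, so each allocation in its support is such an integral assignment and gives every agent exactly one item. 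I would note the mild point that, although Lemma~\ref{lem:faithful} is stated for additive valuations, it acts on the fractional assignment $A^*$ itself, and once each agent holds a single item her unit-demand value $u_i$ agrees with the additive value of that item, so the lemma applies without change.

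Next I would identify which single item each agent can receive. Since the faithful implementation never assigns agent $i$ an item $j$ with $x_{ij}=0$, it suffices to show that in $A^*$ one has $x_{ij}>0$ only for $j \in E_i \cup \{a_i\}$, the items of positive $u_i$-value (exceptional items satisfy $u_i(j)=v_i(j)\ge \TPSi(\inst)$ and $u_i(a_i)=\TPSi(\inst)$, while all other items have $u_i$-value $0$). I would justify this as a property of the welfare-maximizing $A^*$ after a harmless cleanup: because each auxiliary item $a_i$ is valued only by agent $i$, any fractional mass of $a_i$ held by another agent contributes nothing and may be moved onto that agent's own auxiliary item without violating feasibility or decreasing $\sum_i s_i$, after which no agent holds a foreign auxiliary item; and any mass that agent $i$ places on a real item of zero $u_i$-value can then be shifted onto $a_i$ (which has spare capacity exactly because such mass exists), again preserving all constraints and not decreasing $\sum_i s_i$. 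Hence we may assume the item each agent receives is exceptional or her auxiliary item, and in either case its $u_i$-value is at least $\TPSi(\inst)$; when $\TPSi(\inst)=0$ the value bound is trivial.

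The remaining two properties are then immediate. For the support size, recall that $A^*$ has at most $4n-1$ positive variables $x_{ij}$, so the number $f$ of strictly fractional variables satisfies $f \le 4n-1$, and Lemma~\ref{lem:faithful} bounds the support by $f+1 \le 4n$. For the expectation, faithfulness gives that the expected allocation equals $A^*$; since each agent receives exactly one item ex-post, her expected $u_i$-value equals $\sum_{j\in\items_1} u_i(j)\,x_{ij} = s_i$, which is at least $\PSi(\inst)$ by constraint~4.

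I expect the main obstacle to lie entirely in the first property, and specifically in two points that are easy to state but deserve care: that every allocation in the support is an integral assignment giving each agent exactly one item (which needs the matching-polytope structure of $A^*$, not merely that the distribution has expectation $A^*$), and that the welfare-maximizing $A^*$ never spends fractional mass on items of zero $u_i$-value (which needs the auxiliary-item cleanup). Once these are secured, additivity of expectation and the feasibility constraints of LP1 deliver the counting and expectation guarantees directly.
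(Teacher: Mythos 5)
Your proof is correct and follows exactly the route the paper intends: the paper omits the proof of Proposition~\ref{pro:firstFaithful} as ``immediate'' from the properties of $A^*$ (feasibility, optimality, and the $4n-1$ bound on positive variables) together with Lemma~\ref{lem:faithful}, which is precisely what you spell out. Your two points of extra care are well placed and genuine: the exactly-one-item-ex-post claim does require the degree-preserving (clone/Birkhoff--von Neumann) structure of the implementation rather than the lemma's value guarantees alone, and the zero-$u_i$-value cleanup (or an optimality argument when $\TPSi(\inst)>0$) is needed to ensure each agent's item lies in $E_i \cup \{a_i\}$ --- both of which the paper glosses over.
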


Consider now an arbitrary allocation $A'$ in the support of the faithful randomized rounding of $A^*$. With respect to $A'$, let $\agents_1 =\agents_1(A')$  denote the set of agents that receive an item that was  exceptional for them, and let $\agents_2$ denote the set of agents that receive their auxiliary item. (Note that $\agents_1 \cup \agents_2 = \agents$.) 

The first phase ends by giving each agent of $\agents_1$ the item that she receives under $A'$, and not giving agents of $\agents_2$ any item (as the auxiliary items do not really exist).
Thus we have that $\items(\agents_1)$ is the set of items matched to agents in $\agents_1$.
Observe that every agent $i\in \agents_1$ gets at least $TPS_i$ ex-post. The remaining phases will ensure that agents in $\agents_2$ get at least half their TPS ex-post. They will also ensure that ex-ante, every agent gets at least her proportional share (this will make use of item~3 of Proposition~\ref{pro:firstFaithful}).
\medskip

{\bf Phase \ref{step:match-others}:} Completing the matching.

If $\agents_2$ is empty, we go directly to  Phase~\ref{step:LP3}. Hence here we assume that $\agents_2$ is non-empty. 

Let $\items_2 \subset \items$ denote the subset of original items (not including the auxiliary items) that remain unallocated in $A'$ (those items not allocated to $\agents_1$).
Let $\inst_2$
denote the allocation instance that has $\items_2$ as its set of items, $\agents_2$ as its set of agents, and the valuation function of every agent $i \in \agents_2$ remains $v_i$ (restricted to the items in $\items_2$.) As $\inst_2$ is obtained from $\inst$ by removing $|\agents_1|$ agents and $|\agents_1|$ items, it holds that $TPS_i(\inst_2) \ge TPS_i(\inst)$ for every agent $i \in \agents_2$. 

Importantly, recall that  we may assume without loss of generality that $\items_2$ has no item that according to instance $\inst$ was  exceptional for an agent of $\agents_2$. (If  $\items_2$ contains an item $j$ that is exceptional for $i \in \agents_2$, then in $A'$, give $j$ instead of $a_i$ to agent $i$, by this moving agent $i$ out of $\agents_2$ and into $\agents_1$.) The fact that $TPS_i(\inst_2) \ge TPS_i(\inst)$ (for $i \in \agents_2$) implies that also in $\inst_2$,  $\items_2$ has no item that is  exceptional for an agent of $\agents_2$. Consequently, we infer that for every agent $i \in \agents_2$:

\begin{itemize}
	
	\item $TPS_i(\inst_2) = \frac{v_i(\items_2)}{|\agents_2|}$, and consequently also $\frac{v_i(\items_2)}{|\agents_2|} \ge TPS_i(\inst)$. 
	
	\item 
	{There are strictly more than $|\agents_2|$ items $j \in \items_2$ with $v_i(j) > 0$.} (This holds because $v_i(j) < TPS_i(\inst_2)$ for every $j \in \items_2$.)

\end{itemize}

Let $B_i \subset \items_2$ denote the set of $|\agents_2|$ items of highest value to agent $i \in \agents_2$, breaking ties arbitrarily. Let $W_i = v_i(B_i)$. As $\items_2$ has more than $|\agents_2|$ items of positive value for $i$, it follows that $W_i < v_i(\items_2)$.

We now transform the instance $\inst_2$ into a new instance $\inst'_2$. The set of items in $\inst'_2$ is $\items_2$, and the set of agents is $\agents_2$. Every agent $i\in \agents_2$ has a {\em unit demand} valuation function $w_i$, defined as follows.  For $j \in B_i$ we have $w_i(j) = \frac{v_i(j)}{v_i(\items_2) - W_i}$ (observe that the denominator is positive), and for $j \not\in B_i$ we have $w_i(j) = 0$. 

In the matching completion phase, we find a welfare maximizing allocation $B^*$ in $\inst'_2$. Observe that this can be done in polynomial time, because agents are unit demand, and hence finding $B^*$ amounts to solving an instance of maximum weight matching in a bipartite graph $G$, with $\agents_2$ as the set of left side vertices, $\items_2$ as the set of right side vertices, and weight $w_i(j)$ for edge $(i,j)$. In $B^*$, every agent $i \in \agents_2$ receives an item from her respective set $B_i$ (this follows because $|B_i| \ge |\agents_2|$).
We have that $\items(\agents_2)$ is the set of items matched to agents in $\agents_2$.

By the end of the matching phase, every agent holds one item. Agents in $\agents_1$ received their item in Phase~\ref{step:match-exceptional} (under $A'$), whereas agents in $\agents_2$ received their item in Phase~\ref{step:match-others} (under $B^*$). Let $e_i$ denote the item that has been allocated to agent $i$, and let $\items_3 = \items \setminus \{e_1, \ldots, e_n\}$ denote the set of items that are not yet allocated. A key property established by the first two phases is summarized in the following proposition.

\begin{proposition}
	\label{pro:firstItem}
	For every agent $i\in \agents$ it holds that $v_i(e_i) \ge \max_{j\in \items_3} v_i(j)$.
\end{proposition}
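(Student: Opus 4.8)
The plan is to prove the inequality separately for agents in $\agents_2$ and agents in $\agents_1$, since the items they hold arise from the two different phases. Throughout, recall that $e_i$ is the single item held by agent $i$ at the end of Phase~\ref{step:match-others}, and that $j \in \items_3$ means $j$ is not equal to any $e_k$, and in particular $j \notin \items(\agents_1)$, so $j$ is unallocated in the Phase~\ref{step:match-exceptional} realization $A'$.

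For an agent $i \in \agents_2$, recall that $e_i \in B_i$, where $B_i$ is the set of the $|\agents_2|$ most valuable items of $\items_2$ for $i$. Fix $j \in \items_3$. If $j \notin B_i$, then by the definition of $B_i$ as the top $|\agents_2|$ items we immediately get $v_i(j) \le \min_{k\in B_i} v_i(k) \le v_i(e_i)$. If instead $j \in B_i$, I would argue from the optimality of the maximum-weight matching $B^*$: since $j \in \items_3$ it is matched to no agent, so if $v_i(j) > v_i(e_i)$ one could re-match $i$ to the free item $j$ (a legal edge, as $j \in B_i$), strictly increasing the total $w$-weight, because $w_i$ is a positive scaling of $v_i$ on $B_i$. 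This contradicts the optimality of $B^*$, so $v_i(e_i) \ge v_i(j)$ in all cases.

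For an agent $i \in \agents_1$, recall that $e_i$ is exceptional, so $v_i(e_i) \ge \TPSi(\inst)$ and $u_i(e_i) = v_i(e_i)$. Fix $j \in \items_3$. If $v_i(j) < \TPSi(\inst)$ then trivially $v_i(j) < v_i(e_i)$, so it remains to treat the case where $j$ is also exceptional for $i$, i.e. $u_i(j) = v_i(j) \ge \TPSi(\inst)$. I would derive a contradiction from the assumption $v_i(j) > v_i(e_i)$ by splitting on the total fraction $\sum_k x_{kj}$ of $j$ in the optimal solution $A^*$ of LP1. If $j$ is not fully allocated (that is $\sum_k x_{kj} < 1$), then — using that $x_{ie_i} > 0$, since agent $i$ receives $e_i$ in the positive-probability realization $A'$ and the rounding implements $A^*$ — I can shift a small mass $\delta > 0$ from $x_{ie_i}$ to $x_{ij}$; this preserves all constraints of LP1 and raises the objective by $\delta\bigl(u_i(j) - u_i(e_i)\bigr) > 0$, contradicting the optimality of $A^*$. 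If instead $j$ is fully allocated ($\sum_k x_{kj} = 1$), then since the faithful rounding of Lemma~\ref{lem:faithful} implements $A^*$ (its expected allocation equals $A^*$) and each realization assigns $j$ to at most one agent, the probability that $j$ is allocated equals $\sum_k A^*_{kj} = \sum_k x_{kj} = 1$; hence $j$ is allocated in every realization of positive probability, in particular in $A'$, contradicting $j \in \items_3$. Either way the assumption fails, giving $v_i(e_i) \ge v_i(j)$.

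The routine part is the matching exchange argument for $\agents_2$. I expect the crux to be the $\agents_1$ case, where the claim cannot be read off a single allocation: the main obstacle is to argue correctly that an exceptional, strictly-more-valuable free item would either force a welfare improvement in $A^*$ (when it is not fully allocated) or, when it is fully fractionally allocated, be allocated in every realization by the marginal-preserving and capacity-respecting nature of the faithful rounding — and hence cannot lie in $\items_3$.
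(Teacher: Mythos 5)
Your proof is correct and takes essentially the same route as the paper: the $\agents_2$ case follows from the optimality of the matching $B^*$, and the $\agents_1$ case from the optimality of $A^*$ in LP1 (the paper phrases the latter as swapping $j$ for $e_i$ inside the realization $A'$ and re-mixing to get a better feasible fractional solution). Your two-case perturbation of $A^*$ directly --- shifting mass when $\sum_k x_{kj}<1$, and using the marginal-preservation of the faithful implementation to show a fully allocated $j$ cannot lie in $\items_3$ --- simply makes explicit the feasibility details that the paper's one-line argument leaves implicit.
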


\begin{proof}
	For an agent $i\in \agents_1$, the proposition follows from the optimality of the fractional allocation $A^*$. If there is an item {$j \in \items_3$} with $v_i(j) > v_i(e_i)$, then in $A'$ item $j$ could replace item $e_i$ for agent $i$, thus increasing the welfare of $A'$. This would imply that LP1 has a fractional solution of value higher than that of $A^*$, contradicting the optimality of $A^*$.
	
	For an agent $i\in \agents_2$, the proposition follows from the optimality of the integral allocation $B^*$.
\end{proof}

We are now ready to move to the next phase of our algorithm.
\medskip

{\bf Phase~\ref{step:LP3}:} Allocating unmatched items.

In this phase, for each matching computed before, we allocate the items of $\items_3$, the items not in the matching. Every agent $i \in \agents$ has her original valuation function $v_i$ (with $v_i(\items) = n$). 

We first compute a fractional allocation for the items of $\items_3$. This is done by solving a linear program that we refer to as LP3. In LP3, variable $x_{ij}$ denotes the fraction of item $j\in \items_3$ allocated to agent $i$, and $s_i$ denotes the value that agent $i$ derives from the fraction allocation (under valuation function $v_i$).  The parameters $f_i$ are treated as constants in LP3. Their values are computed based on Phase~\ref{step:match-others}. Specifically,  $f_i = \frac{\frac{v_i(\items_2)}{|\agents_2|} - v_i(e_i)}{v_i(\items_2) - W_i}$ for $i\in \agents_2$ (where $e_i$ is the item allocated to agent $i$ in Phase~\ref{step:match-others}, and $W_i = v_i(B_i)$, as defined in Phase~\ref{step:match-others}).
We now present LP3.

{\bf Maximize $\sum_{i\in \agents} s_i$ subject to:}

\begin{enumerate}
	
	
	\item $\sum_{i\in \agents} x_{ij} \le 1$ for every item $j \in \items_3$. (Every item is fractionally  allocated at most once.)
	
	\item $s_i = \sum_{j\in \items_3} v_i(j) x_{ij}$ for every agent $i\in \agents$. (Agent's $i$ value is the sum of the fractions of values that she receives from the fractional allocation.)
	
	\item $s_i \ge f_i v_i(\items_3)$ for every agent $i\in \agents_2$. (This is the key constraint that ties LP3 with the allocation $B^*$ of Phase~\ref{step:match-others}. It applies only to agents in $\agents_2$.)
	
	\item $x_{ij} \ge 0$ for every agent $i\in \agents$ and item $j\in \items_3$. 
	
\end{enumerate}

Note that LP3 may fractionally allocate items from $\items_3$ to agents in $\agents_1$, but only after each agent in $\agents_2$ receives items of sufficiently high value as dictated by  Constraint~3.

\begin{lemma}
	\label{lem:Phase3}
	LP3 is feasible.
\end{lemma}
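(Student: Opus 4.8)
The plan is to exhibit a single explicit feasible point of LP3 rather than argue abstractly. The natural candidate is to give every agent $i \in \agents_2$ the same fraction $f_i$ of each item of $\items_3$, and to give the agents of $\agents_1$ nothing: set $x_{ij} = f_i$ for $i \in \agents_2$, $j \in \items_3$, and $x_{ij} = 0$ otherwise. For this assignment, constraint~2 gives $s_i = f_i\, v_i(\items_3)$ for every $i \in \agents_2$, so constraint~3 holds with equality, while constraint~4 only demands $f_i \ge 0$. Since $\items_2$ contains no item exceptional for $i$ in $\inst_2$, we have $v_i(e_i) < TPS_i(\inst_2) = \frac{v_i(\items_2)}{|\agents_2|}$, so the numerator of $f_i$ is positive and $f_i > 0$. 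Thus the only constraint that is not immediate is the item-feasibility constraint~1, which for this candidate reads $\sum_{i\in\agents_2} f_i \le 1$. Establishing this one inequality is the whole content of the lemma.

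The key idea for $\sum_{i \in \agents_2} f_i \le 1$ is to exploit the optimality of the matching $B^*$ in $\inst'_2$. Write $n_2 = |\agents_2|$. Recalling that $w_i(j) = \frac{v_i(j)}{v_i(\items_2) - W_i}$ on $B_i$, the total weight of $B^*$ is $W(B^*) = \sum_{i \in \agents_2} w_i(e_i) = \sum_{i \in \agents_2} \frac{v_i(e_i)}{v_i(\items_2) - W_i}$, so by the definition of $f_i$,
\[
\sum_{i \in \agents_2} f_i \;=\; \sum_{i\in\agents_2} \frac{v_i(\items_2)}{n_2\,(v_i(\items_2)-W_i)} \;-\; W(B^*).
\]
To lower bound $W(B^*)$ I would invoke the integrality of the bipartite matching polytope: $W(B^*)$ equals the maximum weight of a \emph{fractional} matching, so the weight of any feasible fractional matching is a lower bound. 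I would plug in the uniform fractional matching $y_{ij} = \frac{1}{n_2}$ for $j \in B_i$ and $y_{ij}=0$ otherwise. This is feasible, since each agent receives total fraction $|B_i|/n_2 = 1$ and each item is claimed by at most the $n_2$ agents, giving $\sum_i y_{ij} \le 1$; its weight is $\sum_{i} \frac{1}{n_2} w_i(B_i) = \sum_i \frac{W_i}{n_2(v_i(\items_2)-W_i)}$, whence $W(B^*) \ge \sum_i \frac{W_i}{n_2(v_i(\items_2)-W_i)}$.

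Substituting this bound into the displayed identity yields
\[
\sum_{i\in\agents_2} f_i \;\le\; \sum_{i\in\agents_2} \frac{v_i(\items_2)-W_i}{n_2\,(v_i(\items_2)-W_i)} \;=\; \sum_{i\in\agents_2}\frac{1}{n_2} \;=\; 1,
\]
which verifies constraint~1 and completes the proof of feasibility.

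I expect the main obstacle to be choosing the right comparison matching and seeing that the definitions of $w_i$ and $f_i$ are engineered precisely so that the $W_i$-terms produced by the uniform matching cancel the $v_i(\items_2)$-terms up to the factor $1/n_2$; once the uniform fractional matching is selected, the remaining manipulations are purely algebraic. A minor point to confirm along the way is that $v_i(\items_2) - W_i > 0$ for every $i \in \agents_2$, so that $w_i$ and $f_i$ are well defined; this holds because $\items_2$ has strictly more than $n_2$ items of positive value to $i$, as established in Phase~\ref{step:match-others}.
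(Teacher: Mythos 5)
Your proof is correct and takes essentially the same approach as the paper: the same feasible point $x_{ij}=f_i$ for $i\in\agents_2$ (zero for $\agents_1$), the same reduction of the whole lemma to $\sum_{i\in\agents_2} f_i \le 1$, and the same uniform fractional matching $y_{ij}=\frac{1}{|\agents_2|}$ on the sets $B_i$ played against the optimality of $B^*$. The only cosmetic difference is that you bound $W(B^*)$ directly via integrality of the bipartite matching polytope and an algebraic identity for $\sum_{i\in\agents_2} f_i$, whereas the paper decomposes the fractional matching into a distribution over integral matchings and argues by expectation ($E_D[\sum_{i\in\agents_2} f_i]=1$, so $B^*$, which minimizes $\sum_{i\in\agents_2} f_i$, attains at most $1$) --- these are equivalent formulations of the same fact.
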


\begin{proof}
	Constraints~2 and~4 are satisfied by every solution in which $x_{i,j} \ge 0$ (for all $i$ and $j$). It remains to show that constraints~1 and~3 can be satisfied simultaneously.

	Recall the bipartite graph $G$ from Phase~\ref{step:match-others}. In $G$, consider a fractional matching $F = \{y_{ij}\}$, where $y_{ij} = \frac{1}{|\agents_2|}$ for every agent $i\in \agents_2$ and item $j \in B_i$, and $y_{ij} = 0$ if $j \not\in B_i$. Observe that for every agent $i\in \agents_2$ we have $\sum_{j\in B_2} y_{ij} = 1$ and for every item $j \in \items_2$ we have $\sum_{i \in \agents_2} y_{ij} \le \frac{1}{|\agents_2|}|\agents_2| = 1$. Hence indeed $F$ defines a fractional matching. In $\inst_2$ 
	the fractional matching $F$ gives agent $i\in \agents_2$ fractional value $\sum_{j\in B_i} y_{ij} v_i(j) = \frac{1}{|\agents_2|}v_i(B_i) = \frac{W_i}{|\agents_2|}$. 
	
	Being a fractional matching, $F$ can be represented as a distribution $D$ over integral matchings. In every one of these integral matchings, every agent $i\in \agents_2$ is matched, because $i$ is fully matched in $F$. Select a matching at random from the distribution $D$. Then in expectation, agent $i$ gets an item of value $\sum_{j\in B_i} y_{ij} v_i(j) = \frac{W_i}{|\agents_2|}$. Using $E_D$ to denote expectation over choice from distribution $D$, and denoting by $e_i$ the item received by $i$, we have that $E_D[v_i(e_i)] =  \frac{W_i}{|\agents_2|}$. Hence the expectation of $f_i$ is $E_D[\frac{\frac{v_i(\items_2)}{|\agents_2|} - v_i(e_i)}{v_i(\items_2) - W_i}] = \frac{\frac{v_i(\items_2)}{|\agents_2|} - \frac{W_i}{|\agents_2|}}{v_i(\items_2) - W_i} = \frac{1}{|\agents_2|}$. By linearity of expectation, $E_D[\sum_{i\in \agents_2} f_i] = 1$. This implies that there is a matching in $G$ under which the sum of the respective $f_i$ satisfies $\sum_{i\in \agents_2} f_i \le 1$.  The matching that maximizes $\sum_{i \in [n]} \frac{v_i(e_i)}{v_i(\items_2) - W_i}$ (which is $B^*$ that we use in the matching step, because we defined $w_i(j)$ to be $\frac{v_i(j)}{v_i(\items_2) - W_i}$) also minimizes $\sum_{i\in \agents_2} f_i$, and hence has $\sum_{i\in \agents_2} f_i \le 1$. This implies that the solution with $x_{ij} = f_i$ for every $i \in \agents_2$ and $j \in \items_3$, and $x_{ij} = 0$ for every $i \in \agents_1$, is feasible for LP3.
\end{proof}

Let $C^*$ be a fractional allocation of $\items_3$ that is an optimal solution to LP3. Phase~\ref{step:LP3} ends by performing faithful randomized rounding of $C^*$.
The following proposition follows immediately from the properties of $C^*$ and Lemma~\ref{lem:faithful}, and hence its proof is omitted.

\begin{proposition}
	\label{pro:secondFaithful}
	The faithful randomized rounding of $C^*$ produces a distribution over allocations of the items of $\items_3$, with the following properties:
	
	\begin{enumerate}
		
		\item The distribution is supported on at most $m + 1$ allocations. (The number of constraints in LP3 is $|\items_3| + n + |\agents_2|$. In a basic feasible solution, at least $|\agents_2|$ of the $s_i$ variables are positive, and so at most $|\items_3| + n = m$ of the $x_{ij}$ variables are positive.) 
		
		\item Every agent $i\in \agents_2$ gets ex-ante value $s_i \ge f_i \cdot v_i(\items_3)$.
		
		\item Every agent $i\in \agents_2$ gets ex-post value at least $s_i$, up to one item. That is, at least  $s_i - \max_{j\in \items_3}[v_i(j)]$. 
		
	\end{enumerate}
	
\end{proposition}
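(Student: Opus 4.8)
The plan is to read all three items directly off the faithful implementation lemma (Lemma~\ref{lem:faithful}) applied to the optimal solution $C^*$ of LP3, so that the only genuine work is the bound on the size of the support. Since $C^*$ is a fractional allocation of the items of $\items_3$, Lemma~\ref{lem:faithful} produces a distribution implementing it in which every agent's ex-ante value equals her value under $C^*$, while her ex-post value equals that ex-ante value up to the value of a single fractionally-allocated item. Items~2 and~3 are then immediate: the ex-ante value of agent $i \in \agents_2$ equals $s_i$ by constraint~2 of LP3, and $s_i \ge f_i \cdot v_i(\items_3)$ by constraint~3, which gives item~2; and since the ``one item'' of Lemma~\ref{lem:faithful} has value at most $\max_{j \in \items_3} v_i(j)$, the ex-post value is at least $s_i - \max_{j \in \items_3} v_i(j)$, which gives item~3.

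For item~1, I would take $C^*$ to be a vertex (basic feasible) solution of LP3, so that by Lemma~\ref{lem:faithful} it suffices to show that $C^*$ has at most $m$ strictly fractional $x_{ij}$ variables; the support is then bounded by $m+1$. The first step is the standard vertex count: the non-trivial constraints of LP3 are the $|\items_3|$ capacity constraints~(1), the $n$ equalities~(2), and the $|\agents_2|$ lower bounds~(3), totalling $|\items_3| + n + |\agents_2|$, and at a vertex the number of variables taking nonzero value (across both the $x_{ij}$ and the $s_i$) cannot exceed the number of non-trivial constraints.

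The key to turning this into a bound on the $x_{ij}$ alone is to show that at least $|\agents_2|$ of the nonzero variables are $s_i$'s, that is, that $s_i > 0$ for every $i \in \agents_2$. Here I would invoke the structure established in Phase~\ref{step:match-others}: because no item of $\items_2$ is exceptional for $i$, the matched item satisfies $v_i(e_i) < \frac{v_i(\items_2)}{|\agents_2|}$, so the numerator of $f_i$ is positive and hence $f_i > 0$; and because $i$ has strictly more than $|\agents_2|$ positive-value items in $\items_2$ while only $|\agents_2|$ of them get matched, some positive-value item survives into $\items_3$, giving $v_i(\items_3) > 0$. Constraint~3 then forces $s_i \ge f_i \cdot v_i(\items_3) > 0$. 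Subtracting these at least $|\agents_2|$ positive $s_i$ variables from the at most $|\items_3| + n + |\agents_2|$ positive variables leaves at most $|\items_3| + n = m$ positive (hence at most $m$ strictly fractional) $x_{ij}$ variables, as required.

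The one place needing care is this positivity claim: the bound $m+1$ rests on charging $|\agents_2|$ of the nonzero variables to the $s_i$'s rather than to the $x_{ij}$'s, so it is essential that each such $s_i$ be strictly positive and not merely nonnegative. Fortunately the two facts it relies on, $f_i > 0$ and $v_i(\items_3) > 0$, are both consequences of structure already assembled in Phases~\ref{step:match-exceptional} and~\ref{step:match-others} (no item of $\items_2$ is exceptional for $i$, and $i$ has more than $|\agents_2|$ positive-value items in $\items_2$), so no new argument is needed; everything else is a verbatim application of Lemma~\ref{lem:faithful}.
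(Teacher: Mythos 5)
Your proof is correct and follows essentially the same route as the paper: the paper declares the proposition to ``follow immediately'' from Lemma~\ref{lem:faithful} and the LP3 constraints, with item~1 justified by exactly the basic-feasible-solution count you give (constraints $|\items_3|+n+|\agents_2|$, at least $|\agents_2|$ positive $s_i$'s, hence at most $m$ positive $x_{ij}$'s, and support $f+1\le m+1$ by the lemma). Your only addition is an explicit verification that $s_i>0$ for every $i\in\agents_2$ (via $f_i>0$, which uses that no item of $\items_2$ is exceptional so $v_i(e_i)<\frac{v_i(\items_2)}{|\agents_2|}$, and $v_i(\items_3)>0$), a point the paper asserts without proof but which indeed follows from the Phase~\ref{step:match-others} facts you cite.
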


{The allocation algorithm above computes a distribution over $4n$ matchings in Phases~\ref{step:match-exceptional} and \ref{step:match-others}, and for each such matching, in Phase~\ref{step:LP3} it computes a distribution over $m+1$ allocations of $\items_3$. We thus have a distribution over $4n(m+1)$ allocations and we next prove that it satisfies the requirements of Theorem~\ref{thm:BoB} (except the support reduction to $n$ allocations, that will be handled in Phase~\ref{step:support} below).}

{\bf Every agent gets her proportional share ex-ante.} By item~3 of Proposition~\ref{pro:firstFaithful}, with respect to $A^*$, every agent $i$ gets value at least $PS_i(\inst)$ ex-ante. However, this value might have been attained by being allocated the respective auxiliary item $a_i$, of value $TPS_i(\inst)$. In this case, agent $i$ does not actually get $a_i$, but is instead included in $\agents_2$. Hence we need to show that for every agent $i\in \agents_2$, her combined ex-ante value from Phases~\ref{step:match-others} and~\ref{step:LP3} is at least $TPS_i(\inst)$. This ex-ante value is at least $v_i(e_i) + f_i\cdot v_i(\items_3)$. We claim that indeed $v_i(e_i) + f_i \cdot v_i(\items_3) \ge TPS_i(\inst)$.

Recall that $f_i = \frac{\frac{v_i(\items_2)}{|\agents_2|} - v_i(e_i)}{v_i(\items_2) - W_i}$. 
Observe also that $v_i(\items_2) - W_i \le v_i(\items_3)$,  because the total value for $i$ of the $|\agents_2|$ items allocated under $B^*$ cannot be larger than $W_i=v_i(B_i)$ (as $B_i$ contains the $|\agents_2|$ items of highest value). Combining these observations we have that:

$$f_i =  \frac{\frac{v_i(\items_2)}{|\agents_2|} - v_i(e_i)}{v_i(\items_2) - W_i} \ge \frac{\frac{v_i(\items_2)}{|\agents_2|} - v_i(e_i)}{v_i(\items_3)} =  \frac{v_i(\items_2) - |\agents_2| \cdot v_i(e_i)}{|\agents_2| \cdot v_i(\items_3)}$$

We can now establish the claim.

$$v_i(e_i) + f_i v_i(\items_3) \ge v_i(e_i) + \frac{v_i(\items_2) - |\agents_2|\cdot v_i(e_i)}{|\agents_2| \cdot v_i(\items_3)} v_i(\items_3) = \frac{v_i(\items_2)}{|\agents_2|} \ge TPS_i(\inst)$$
(for the last equality, see discussion in Phase~\ref{step:match-others}).
\medskip

{\bf Every agent gets at least half her TPS ex-post.} For agents in $\agents_1$, this holds by definition. For agents $i\in \agents_2$, we have already shown that ex-ante they get at least $TPS_i(\inst)$. Item~3 of Proposition~\ref{pro:secondFaithful} implies that ex-post agent $i$ gets a value of at least $TPS_i(\inst) - \max_{j\in \items_3}[v_i(j)]$. If $\max_{j\in \items_3}[v_i(j)] \le \frac{TPS_i(\inst)}{2}$, then at least a value of $\frac{TPS_i(\inst)}{2}$ remains. If $\max_{j\in \items_3}[v_i(j)] > \frac{TPS_i(\inst)}{2}$, then also $v_i(e_i) \geq \frac{TPS_i(\inst)}{2}$ (by Proposition~\ref{pro:firstItem}), and hence $i$ gets half her TPS already after Phase~\ref{step:match-others}.    
\medskip

{\bf The allocation is prop1 ex-post.} {If there is an item that is exceptional for agent $i$, then  an item that $i$ values most, denoted as item $j$, necessarily satisfies $v_i(j) \ge PS_i$ (if $v_i(j) < PS_i$ then $TPS_i = PS_i$, and then $j$ is not exceptional for $i$). In this case, every allocation gives $i$ her proportional share, up to the item $j$.} If there is no item that is exceptional for agent $i$, then $TPS_i(\inst) = PS_i(\inst)$, and also, $i$ ends up in $\agents_2$. Item~3 of Proposition~\ref{pro:secondFaithful} ensures that she gets $TPS_i(\inst)$ up to one item, which in this case is equivalent to $PS_i(\inst)$ up to one item.
\medskip

{\bf The randomized allocation is computed in polynomial time.} The TPS of every agent can be computed in polynomial time. In various steps, the algorithm involves scaling of the valuation functions by constant factors, which too can be done in polynomial time. The heavier computational aspects of the algorithm are the following. Phases~\ref{step:match-exceptional} and~\ref{step:LP3} each involve solving an LP, and then performing faithful randomized rounding. Phase~\ref{step:match-others} involves finding a maximum weight matching. Also, these heavier steps can be done in polynomial time, using standard algorithms.
\medskip

{\bf The randomized allocation is supported on $n$ allocations.} The combination of item~2 of Proposition~\ref{pro:firstFaithful} and item~1 of Proposition~\ref{pro:secondFaithful} implies that the randomized allocation is supported over at most $4n(m+1)$ allocations. In Phase~\ref{step:support} (to be described next) of our algorithm, we reduce this number to $n$.
\medskip

{\bf Phase~\ref{step:support}:} Reducing the support size to be at most $n$.

So far we established that there is a distribution $D$ over allocations, giving every agent at least her proportional share ex-ante, and supported on at most $4n(m+1)$ ``good" allocations: allocations that give every agent at least half her TPS, and are Prop1. We now explain how to reduce the support size to at most $n$. 

Set up the following linear program. For every allocation in the support of $D$ (index these allocations as $G^k$) there is a variable $z_k$ (representing the probability that $G^k$ is selected in our new distribution), and for every agent $i$ there is a variable $y_i$ (representing her ex-ante value). 
The coefficients $a_{ik}$ denote the value that agent $i$ derives from the items allocated to him under allocation $G^k$. 
The linear program LP4 is as follows:
\medskip

{\bf Minimize $z$ subject to:}

\begin{enumerate}

	\item $\sum_k z_k = z$. 
	
	\item $y_i = \sum_k a_{ik} z_k$ for every agent $i\in \agents$. ($y_i$ represents the ex-ante value of the randomized allocation to agent $i$.)
	
	\item $y_i \ge PS_i$ for every $i\in \agents$. (Every agent gets at least her proportional share ex-ante.)
	
	\item $z_k \ge 0$ for every $G^k$ ($z_k$ is proportional to the probability of $G^k$). 
	
\end{enumerate}

The distribution $D$ shows that the optimal value $z^*$ of LP4 satisfies $z^* \le 1$. LP4 has $2n+1$ constraints (excluding non-negativity constraints), and hence an optimal basic feasible solution is supported on at most $2n+1$ positive variables. As $z$ and the variables $y_i$ are all positive, there are at most $n$ variables $z_k$ that are positive. Scale the $z_k$ variables of the optimal solution by $\frac{1}{z^*}$, so that they form a probability distribution. Likewise, scale the $y_i$ variables by $\frac{1}{z^*}$ so that constraint~2 remains satisfied. Constraint~3 also remains satisfied, as $\frac{1}{z^*}\geq 1$. The scaled values of the $z_k$ variables represent a randomized allocation that proves Theorem~\ref{thm:BoB}.

	\section{Discussion}
		\label{sec:discussion}

We have presented a best-of-both-worlds result, showing that for every allocation instance with additive valuations, there is a randomized allocation that gives every agent at least her proportional share ex-ante, and at least half of her TPS (and MMS) ex-post. 
Moreover, we have shown that there is a deterministic polynomial time algorithm that, given the valuation functions of the agents, computes a faithful 
implementation of such a randomized allocation, supported on at most $n$ allocations.
We next discuss directions in which our results can possibly be improved upon, presenting impossibilities of some natural extensions, as well as some open problems.

\subsection{Other Fairness Guarantees}


Theorem~\ref{thm:BoB} guarantees every agent at least half her TPS ex-post. This is nearly the best possible, in the sense that there are allocation instances for which no allocation gives every agent more than a $\frac{n}{2n-1}$ fraction of her TPS. Still, it might be interesting to see if the ex-post BoBW guarantee can be improved to $\frac{n}{2n-1}$ of the TPS, to match the lower bound.

For the maximin share, MMS, it might be possible to offer every agent a $\rho$-fraction of her MMS ex-post, for some $\frac{1}{2} < \rho < 1$ significantly larger than $\frac{1}{2}$. Even if so, it is not clear if such a guarantee will be better than half the TPS, because the gap between MMS and TPS may be a factor of $2 - \frac{1}{n}$.

Our focus in this paper is on share-based fairness notions (such as proportional, MMS and TPS). Other works (see Theorems~\ref{thm:Aziz20} and~\ref{thm:FSV20}) addressed envy-based notions (such as EF and EF1). We briefly discuss here whether envy-based fairness notions offer agents higher value than share-based notions. Our allocation is ex-ante proportional, but not necessarily ex-ante envy free. One might argue that guaranteeing an ex-ante envy-free allocation (as indeed achieved in Theorems~\ref{thm:Aziz20} and~\ref{thm:FSV20}) offers agents higher value ex-ante, as every envy-free allocation is also proportional, but there are proportional allocations that are not envy free. However, as every fractional proportional allocation Pareto dominates a fractional envy free allocation (the trivial allocation in which every agent gets a $\frac{1}{n}$ fraction of every item), an ex-ante envy freeness guarantee by itself offers no advantage over an ex-ante proportional guarantee, in terms of the value that it guarantees to agents. 

As to ex-post guarantees, Example~\ref{ex:notMMS} (among others) illustrates that EF1 allocations might give agents value that is a factor of $\Omega(n)$ smaller than their TPS (and MMS). Our allocations guarantee every agent at least half her TPS ex-post. In this paper we did not aim to also get EF1. 
A direction for future work is to obtain EF1 on top of the properties we obtain (TPS approximation and ex-ante proportionality).
Our algorithm does not obtain all these properties (it is not EF1).
In Appendix \ref{app:EF1} we present a result that shows that EF1 is not in conflict with the combination of approximate TPS and approximate proportionality (yet leave open the question for exact proportionality). Specifically, we present a randomized allocation that does achieve (ex-post) both EF1 as well as an approximate TPS guarantee, but only an approximate proportional guarantee ex-ante (concretely, it achieves $\frac{n}{2n-1}$-TPS and EF1 ex-post, and $\frac{n}{2n-1}$-proportional share ex-ante).


An alternative BoBW result that one might consider is a result in which we replace the ex-post guarantee to be EFX. 
As every EFX allocation gives every agent a $\frac{n}{2n-1}$ fraction of her TPS {(see proof in Appendix~\ref{app:proofs})}, such a result will, in particular, strengthen our result and obtain the best possible TPS fraction of $\frac{n}{2n-1}$. 
{For two agents, running the standard cut-and-choose protocol with a random cutter gives a randomized allocation that is proportional ex-ante (which for two agents implies also EF ex-ante), and both EFX and MMS ex-post (and $2/3$-TPS).}
A major hurdle in obtaining such a {combination of properties for arbitrarily many agents} is that  EFX allocations are not known to always exist beyond 3 agents, so achieving EFX in the BoBW setting seems to be currently beyond reach. 
Moreover, \citet{plaut2020almost} showed that EFX conflicts with Pareto optimality\footnote{Their example for the additive case uses goods with zero values. Such examples are not known for instances with no zero-value items.}, another important property that we wish to have, thus we cannot have an EFX-based BoBW result with Pareto optimality. 
 %




\OLD{Unlike Theorem~\ref{thm:Aziz20}, our Theorem~\ref{thm:BoB} does not offer an ex-post EF1 guarantee, and does not offer an ex-ante envy freeness guarantee. \mbc{need to tune this down: }Though we do not endorse envy freeness for its own sake (e.g., because it is not preserved under Pareto domination), we do appreciate that envy freeness and its variants are desirable properties, in the sense that they imply other properties that we do care about. In particular, envy freeness implies proportionality, and EFX implies a $\frac{n}{2n-1}$ fraction of the TPS. As Theorem~\ref{thm:BoB} already guarantees ex-ante proportionality, and moreover, every fractional proportional allocation Pareto dominates some envy free fractional allocation (the uniform fractional allocation of Proposition~\ref{pro:uniform}), we do not feel the need to add also an ex-ante envy freeness requirement to our results. Likewise, adding an EF1 ex-post property would not make the result stronger in our eyes, as we already achieve the Prop1 property that is implied by EF1. On the other hand, an ex-post EFX property seems desirable, as it would imply a $\frac{n}{2n-1}$ fraction of the TPS, a guarantee that we do not currently have. However, as EFX allocations are not known to always exist, achieving EFX in the BoBW setting seems to be currently beyond reach. Moreover, even if EFX allocations do always exist, ex-post EFX conflicts with other properties that we wish to have. 

Consider the following simple example. There are two agents, two items ($a$ and $b$), with $v_1(a) = 1$, $v_1(b) = 0$, $v_2(a) = 2$ and $v_2(b) = 1$. This instance has only one allocation that is both EFX and PO, namely, agent~1 gets $a$ and agent~2 gets $b$. This does not allow us to obtain BoBW result in which agent~2 gets her proportional share ex-ante. 
}

\subsection{Economic Efficiency}

As noted in Corollary~\ref{cor:BoB}, our BoBW result can be supported on allocations that are Pareto optimal  (though our polynomial time randomized allocation of Theorem~\ref{thm:BoB} does not guarantee ex-post PO). Some stronger economic efficiency properties cannot be achieved, as they contradict our ex-post fairness properties. For example, ex-ante Pareto optimal (fPO) cannot be achieved, as shown in Proposition~\ref{pro:noParetoExAnte}. Likewise,
the next example shows that one should not attempt to approximately maximize welfare, not even if valuation functions are normalized and scaled so that every agent has the same value for the set of all items $\items$. 

Consider the following example with $m=n$, with $n$ being a perfect square. For every agent $i \le \sqrt{n}$, every item $j$ with $j = i$ modulo $\sqrt{n}$ has value $\sqrt{n}$ (and the rest of the items have value~0). For the remaining agents, every item has value~1. Observe that $v_i(\items) = n$ for every agent $i$, so valuations are indeed normalized.
The TPS of each of the remaining agents is~1, and hence if we want every agent to get at least a constant fraction of her TPS, 
in every ex-post allocation each of them must receive at least one item. The welfare of every such allocation is at most $n - \sqrt{n} + \sqrt{n} \cdot \sqrt{n} < 2n$, whereas the maximum welfare allocation gives each of the first $\sqrt{n}$ agents value of $n$ ($\sqrt{n}$ items, each of value $\sqrt{n}$), resulting in optimal welfare of $n^{\frac{3}{2}}$. Thus, any allocation that gives every agent a constant fraction of her TPS does not approximate the maximum welfare (even when valuations are normalized) to any  factor better than $\Omega(\sqrt{n})$.

This leads us to consider Nash Social Welfare (NSW). We cannot hope to exactly maximize the fractional NSW (fNSW) ex-ante, as this allocation is fractionally PO, and Proposition~\ref{pro:noParetoExAnte} shows an impossibility result for this case. However, we can hope to get a constant approximation for the maximum fNSW ex-ante. 
It would be interesting to understand whether some version of our allocation algorithm from the proof of Theorem~\ref{thm:BoB} (with suitable modifications) could provide such a result. 
{ In Appendix \ref{app:fNSW} we show that in the simplest non-trivial case (when there are $n$ items to allocate to $n$ additive agents) there is a randomized allocation algorithm that obtains fNSW approximation ex-ante, while giving every agent at least her proportional share ex-ante and at least her TPS ex-post}. This demonstrates that at least in simple settings, it is possible to achieve fNSW approximation ex-ante together with the other properties we are after.

\subsection{Incentive Compatibility}
\label{sec:truthful}

We discuss here truthfulness aspects for individual agents, and do not address in our discussion more demanding aspects of group strategyproofness.

BoBW allocation mechanisms are randomized. As such, one may consider either ex-post or ex-ante truthfulness notions. The most straightforward notion is that of {\em universal truthfulness} -- reporting the true valuation function is a dominant strategy, with respect to both ex-post and ex-ante values simultaneously, for every realization of the coin tosses of the randomized allocation. 
A BoBW result with universal truthfulness was achieved in~\citep{BEF20} in the special case of additive {\em dichotomous} valuations (and also for submodular dichotomous valuations). However, for general additive valuations, there are impossibility results for truthful mechanisms, and they carry over to universally truthful mechanisms. In particular, it is proved in~\citep{ABCM2017} that every truthful allocation mechanism for two agents that allocates all items must, in some instances,  give an agent no more than a $\frac{2}{m}$ fraction of her MMS. Consequently, every universally truthful randomized allocation mechanism for two agents that allocates all items must sometimes give an agent no more than a $\frac{2}{m}$ fraction of her MMS ex-post. 
Moreover, this implies the next proposition regarding $n$ agents. The proposition is  a direct corollary from the result of \citet{ABCM2017}.  
\begin{proposition}
	Every universally truthful randomized allocation mechanism for $n$ agents and $m$ items that is ex-post PO must sometimes not give an agent more than an $O(\frac{n}{m})$ fraction of her MMS ex-post.
\end{proposition}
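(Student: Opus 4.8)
The plan is to reduce the $n$-agent problem to the two-agent result quoted just before the proposition. Given a universally truthful, ex-post PO mechanism $\mathcal{A}$ for $n$ agents and $m$ items, I would embed an arbitrary two-agent instance into it by adding $n-2$ \emph{dummy} agents together with $n-2$ \emph{private} items. Concretely, fix a set $T$ of $m'=m-(n-2)$ items and a two-agent instance $J$ on $T$ with valuations $v_1,v_2$. Build the $n$-agent instance $I(J)$ on $T\cup\{g_3,\dots,g_n\}$ in which agents $1,2$ keep $v_1,v_2$ on $T$ and value every $g_i$ at $0$, while each dummy agent $i\ge 3$ values $g_i$ at some large $V$ and everything else at $0$. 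The first step is to observe that ex-post Pareto optimality pins down the coarse structure of every realized allocation: since $g_i$ has positive value only for agent $i$, any PO allocation must give $g_i$ to agent $i$; and since (after choosing the hard instance so that every item of $T$ is valued positively by agent $1$ or $2$) any item of $T$ placed with a dummy agent could be profitably moved to a real agent, PO forces all of $T$ to be split between agents $1$ and $2$. Thus, in every realization the real agents receive exactly an allocation of $T$.

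Next I would fix the dummy agents at their true reports and regard $\mathcal{A}$ as a function of $(v_1,v_2)$ only. By the structure above this induces a randomized mechanism $\mathcal{A}'$ for two agents on $T$ that allocates all of $T$ in every realization. This $\mathcal{A}'$ is universally truthful: since a real agent never receives any $g_i$ (value $0$), her value under $I(J)$ equals her value under $\mathcal{A}'(J)$, so any profitable manipulation in $\mathcal{A}'$ would be a profitable manipulation in $\mathcal{A}$, contradicting universal truthfulness of $\mathcal{A}$. I can therefore invoke the randomized two-agent bound (the universally-truthful version of the result of \citet{ABCM2017} stated above): there is a two-agent instance $J^*$ on $T$ and a realization in which some real agent, say agent $1$, receives ex-post value at most $\tfrac{2}{m'}MMS_1^{(2)}(J^*)$, where $MMS_1^{(2)}$ denotes her maximin share over $2$ bundles of $T$.

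Finally I would translate this back to $I(J^*)$. In the same realization agent $1$'s ex-post value in $I(J^*)$ equals her value under $\mathcal{A}'(J^*)$ (she gets no private item), so it is at most $\tfrac{2}{m'}MMS_1^{(2)}(J^*)\le \tfrac{v_1(T)}{m'}$, using $MMS_1^{(2)}\le \tfrac{1}{2}v_1(T)$. On the other hand, her maximin share in the full $n$-agent instance, $MMS_1^{(n)}(I(J^*))$, is obtained by partitioning $T$ (the $g_i$'s being worthless to her) into $n$ bundles. If the hard instance is taken to consist of many roughly equal small items — which is exactly the regime in which the $\tfrac{2}{m'}$ two-agent bound is meaningful — such a partition can be balanced so that $MMS_1^{(n)}(I(J^*))=\Omega\!\big(v_1(T)/n\big)$. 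Combining, agent $1$'s ex-post value is at most $\tfrac{v_1(T)}{m'}=O\!\big(\tfrac{n}{m'}\big)MMS_1^{(n)}(I(J^*))=O\!\big(\tfrac{n}{m}\big)MMS_1^{(n)}$, since $m'=m-(n-2)=\Theta(m)$ in the relevant regime $n=O(m)$.

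The conceptually routine parts are the PO-forcing argument and the truthfulness transfer. The step that carries the real content — and that I expect to be the main obstacle to state cleanly — is the maximin bookkeeping in the last paragraph: the two-agent lower bound is stated relative to the \emph{two-bundle} maximin share, whereas the proposition asks about the \emph{$n$-bundle} maximin share, and these can differ by a factor of up to $\Theta(n)$. Making this precise requires choosing the \citet{ABCM2017} hard instance so that the items valued by the stuck agent are numerous and balanced enough that her $n$-bundle maximin share is $\Omega(v_1(T)/n)$ rather than possibly $0$ (which would happen if $T$ had fewer than $n$ positively valued items); it is precisely this factor-$n$ gap between the two shares that turns the two-agent $O(1/m)$ bound into the claimed $O(n/m)$ bound.
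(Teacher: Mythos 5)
Your reduction is correct and is essentially the paper's own argument: the paper likewise embeds the two-agent impossibility of \citet{ABCM2017} by adding $n-2$ auxiliary agents whose PO-forced consumption confines the original items to the two real agents, and it likewise needs ``large enough $m$'' (balanced hard instances) so that the factor-$\Theta(n)$ gap between the $2$-bundle and $n$-bundle maximin shares converts the two-agent $\frac{2}{m}$ bound into the claimed $O(\frac{n}{m})$ bound. The only (immaterial) difference is the gadget: the paper uses a single auxiliary item desired by all $n-2$ auxiliary agents, rather than your $n-2$ private items, which avoids consuming $\Theta(n)$ of the $m$ items and hence the $m'=\Theta(m)$ bookkeeping.
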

This proposition holds by adapting the two agents impossibility result proof, having large enough $m$ and adding one auxiliary item and $n-2$ auxiliary agents that each wants only the auxiliary item.  

An alternative truthfulness notion is that of {\em truthful in expectation} (TIE), which is an ex-ante notion, and postulates that agents attempt to maximize their expected utility. In the BoBW setting, it is indeed reasonable to assume that agents are expectation maximizers, as these are the type of guarantees that they are given ex-ante.
However, TIE tacitly assumes that agents are not strategic concerning their ex-post guarantees, an aspect that is somewhat problematic in BoBW settings. A TIE BoBW result for additive dichotomous valuations is achieved in~\citep{HPPS20}.

None of the BoBW results (the previous Theorems~\ref{thm:FSV20} and~\ref{thm:Aziz20}, and our Theorem~\ref{thm:BoB}) provides a truthful mechanism, not even TIE. On the other hand, the weaker and rather trivial Proposition~\ref{pro:uniform} does give a TIE mechanism, but with rather weak BoBW guarantees. 

In the mechanism of Proposition~\ref{pro:uniform} an agent that maximizes expected utility has no incentive to lie, but also no incentive to be truthful. Using a trick of~\citet{MT10}, we can modify that mechanism so as to make truthfulness the unique dominant strategy. After receiving the valuation functions of all agents, first generate a fractional solution $A^f_R$ at random. If $A^f_R$ Pareto dominates the uniform fractional allocation, then faithfully implement $A^f_R$. If not, then faithfully implement the uniform fractional allocation.  This mechanism is ex-ante proportional and ex-post Prop1.

We do not know if there is a TIE mechanism that offers every agent at least a constant fraction of her MMS ex-post (even without requiring any ex-ante guarantee).


\subsection{Extension to Chores (Bads)}

We briefly discuss here the BoBW setting for indivisible chores. As we shall see, known results for deterministic allocation of chores easily lead to BoBW results for chores that are comparable to, or even stronger than, the ones that we prove for goods.

Recall the setting of allocating a set $\items$ of indivisible items to $n$ agents with additive valuations. Items are referred to as {\em chores} (or {\em bads}) if $v_i(j) \le 0$ for every agent $i$ and item $j\in \items$. {Thus, this chore has a \emph{cost} of $-v_i(j)$.}
In this setting, agents prefer not to receive any item, but {all chores must be taken care of, so the allocation is required to allocate all of them.} 

The definitions of proportional share (PS) and maximin share (MMS) with respect to chores are exactly the same as they are with respect to goods. For the truncated proportional share (TPS), we propose the following simple definition with respect to chores: $TPS_i = \min[PS_i, \min_{j\in \items}[v_i(j)]]$. In analogy with the definition of TPS with respect to goods, if there are no exceptional items, then the TPS for chores is equal to the proportional share. For chores, an item is exceptional if its {\em cost} is larger than the cost of the proportional share (or equivalently, its value is more negative than the proportional share), whereas for goods an item is exceptional if its value is larger than the value of the proportional share. In the presence of exceptional items, the definition of TPS for goods needs to account for the possibility that an agent other that our agent receives the good, whereas the definition of TPS for chores needs to account for the possibility that our agent receives the chore. As these are different types of events, this leads to technical differences between the two definitions of TPS, the one for goods and the one for chores. 

For chores, the value of 
the proportional share, the MMS and the TPS are all
negative, unless $v_i$ is identically~0. Note also that as in the case for goods, also for chores the following inequalities hold (for additive valuations): $PS_i \ge TPS_i \ge MMS_i$. 

In~\citep{ARSW17} it is shown (adapting results of~\cite{KurokawaPW18} from goods to chores) that there are allocation instances for indivisible chores for agents with additive valuations in which no allocation gives every agent value at least her MMS. On the positive side, it was shown that letting agents choose chores in a round-robin fashion assigns every agent chores of cost at most  $2 - \frac{1}{n}$ times the cost of her MMS. 
This is a  $(2 - \frac{1}{n})$-{approximation} 
of her MMS. (Note that as the MMS is negative, approximation ratios are at least~1 rather than at most~1.) In fact, their proof shows that every agent $i$ gets value at least $PS_i + (1 - \frac{1}{n})\min_{j\in \items}[v_i(j)]$, and hence {no worse than} a $(2 - \frac{1}{n})$-{approximation} of her TPS. 
For TPS, this is nearly the best possible guarantee. Consider an instance with $n+1$ items each of value $-1$. The TPS of every agent is $-\frac{n+1}{n}$, whereas in every allocation, some agent gets a bundle of value not better than $-2$. Hence in every allocation, some agent gets a bundle of value no better than $2 - \frac{2}{n+1}$ {times} her TPS.


The round-robin allocation can easily be transformed into a BoBW result by using a random permutation to determine the order among agents. In this case, every agent gets at least her proportional share ex-ante, and chores of cost at most  $2 - \frac{1}{n}$ times her TPS cost ex-post. Moreover, this allocation is also EF1 ex-post.

Unlike the case of goods, where the {\em up-to-one-good} paradigm does not offer any approximation for the TPS, for chores, the up-to-one-good paradigm guarantees a factor~2 approximation of the TPS. Consequently, approaches similar to those of~\citep{FSV20} (with straightforward adaptation to chores instead of goods) can be used in order to get fractional allocations that are fractionally Pareto optimal both ex-ante and ex-post, are proportional ex-ante, and {assign every agent chores of cost at most twice her TPS cost ex-post.}

If one considers ex-post guarantees with respect to the MMS instead of the TPS, better approximation ratios can be achieved ex-post. In~\citep{ALW20} it was shown that a certain picking order leads to allocations that 
assign 
every agent chores of cost {no worse than} 
$\frac{5}{3}$ times the cost of 
her MMS. 
Also this allocation can be transformed into a BoBW result by using a random permutation to determine the order among agents. 

\begin{observation}
The allocation mechanism of \citep{ALW20} with a random permutation over the agents assigns every agent chores of cost at most $\frac{5}{3}$  times her MMS cost ex-post, and in  expectation her cost is at most her proportional share. 
\end{observation}

For deterministic allocations (not as part of a BoBW result), the $\frac{5}{3}$ ratio is known not to be best possible. There are allocations that assign 
every agent chores of cost at most $\frac{11}{9}$ times 
the cost of her MMS~\citep{HL19}.

For allocation instances that involve a mixture of goods and chores, known also as {\em mixed manna}, there are instances in which agents have additive valuations, the MMS of every agent is strictly positive, whereas in every allocation (that allocates all items) some agent receives a bundle of value at most~0 \citep{KMT2020}. 
Hence, for mixed manna, it is not always possible to find an allocation that gives every agent a positive fraction of her MMS.

	
	

	\bibliographystyle{abbrvnat}
	\bibliography{bib}

	\appendix
	\section{Missing Proofs}
\label{app:proofs}

We first show that there is a polynomial time algorithm that gives every agent (with an additive valuation) at least a $\frac{n}{2n-1}$ fraction of her TPS. 

We say that an allocation $A = (A_1, \ldots, A_n)$ is {\em half-fair} if for every two agents $i$ and $j$, if $|A_j| > 1$ then $v_i(A_i) \ge \frac{1}{2}v_i(A_j)$. In other words, if an agent $i$ prefers bundle $A_j$ over her own bundle $A_i$, then either $A_j$ contains only one item, or $A_j$ is at most twice as valuable to $i$ than $A_i$.

Proposition \ref{prop:halffair} below shows that in every half-fair allocation, every agent gets at least a $\frac{n}{2n-1}$ fraction of her TPS.
We note that every EFX allocation is half-fair, and thus gives every agent at least a $\frac{n}{2n-1}$ fraction of her TPS. 
Unfortunately, EFX allocations are not known to always exist. However, EF1 allocations do always exist, as shown in~\citep{LMMS04}. Though EF1 allocations 
are not necessarily
half-fair (recall Example~\ref{ex:notMMS}), the EF1 allocations generated by the algorithm of~\citet{LMMS04} are half-fair. Hence the algorithm of~\citet{LMMS04} produces an allocation that gives every agent at least a $\frac{n}{2n-1}$ fraction of her TPS.

\begin{proposition}\label{prop:halffair}
Every half-fair allocation $A = (A_1, \ldots, A_n)$ gives agent $i$ at least a $\frac{n}{2n-1}$ fraction of her TPS.
\end{proposition}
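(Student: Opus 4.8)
The plan is to exploit the identity that characterizes the TPS: taking $t = \TPSi$ in Definition~\ref{def:TPS} gives exactly $\sum_{g \in \items} \min[v_i(g), t] = n\,t$. I would fix the agent $i$, set $\alpha = v_i(A_i)$, and aim to prove $\alpha \ge \frac{n}{2n-1}\,t$. The strategy is to charge the total truncated value $n\,t$ against the individual bundles, using half-fairness to bound each bundle's contribution in terms of $\alpha$, and then to optimize over the combinatorial quantity counting how many bundles are ``large''.

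First I would assume (as holds for the allocations of interest, e.g.\ those output by the algorithm of~\citet{LMMS04}) that $A$ allocates all of $\items$, so that $n\,t = \sum_{j} c_j$ where $c_j := \sum_{g \in A_j} \min[v_i(g), t]$. Next I would upper-bound each $c_j$: (i) for the agent's own bundle, $c_i \le v_i(A_i) = \alpha$; (ii) for a bundle $A_j$ with $j \ne i$ and $|A_j| \ge 2$, half-fairness gives $c_j \le v_i(A_j) \le 2\alpha$; and (iii) for a bundle $A_j$ with $j \ne i$ and $|A_j| \le 1$, at most one item contributes and each truncated term is at most $t$, so $c_j \le t$. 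Letting $\ell$ denote the number of bundles $A_j$ with $j \ne i$ and $|A_j| \ge 2$, these bounds sum to $n\,t \le \alpha + 2\ell\,\alpha + (n-1-\ell)\,t$.

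This rearranges to $(1+\ell)\,t \le (1 + 2\ell)\,\alpha$, i.e.\ $\alpha \ge \frac{1+\ell}{1+2\ell}\,t$. Since $\frac{1+\ell}{1+2\ell}$ is decreasing in $\ell$ and $\ell \le n-1$, the worst case is $\ell = n-1$, which yields $\alpha \ge \frac{n}{2n-1}\,t$, as desired.

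The conceptual heart of the argument, and the step I expect to require the most care, is the bookkeeping in the charging step: one must handle the truncation correctly (a singleton bundle holding an over-proportional item contributes only $t$, not its full value, to the sum equal to $n\,t$), and one must recognize that it is precisely the count $\ell$ of multi-item bundles that governs the bound, with the extremal case $\ell = n-1$ producing the tight constant $\frac{n}{2n-1}$. Once the right parameter is isolated, the remainder is routine rearrangement and a monotonicity observation.
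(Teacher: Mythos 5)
Your proof is correct, but it takes a genuinely different route from the paper's. The paper's proof in Appendix~\ref{app:proofs} never touches the fixed-point identity defining the TPS; instead it peels off the singleton bundles: letting $K$ be the set of items sitting in non-$i$ singleton bundles, it invokes the monotonicity property $\TPSiThreePar{n}{\items}{v_i} \le \TPSiThreePar{n-|K|}{\items\setminus K}{v_i}$ (the recursive characterization from Section~1.3.1), then bounds the reduced-instance TPS by the reduced-instance proportional share $\frac{v_i(\items\setminus K)}{n-|K|}$, uses half-fairness to get $v_i(\items\setminus K) \le \bigl(2(n-|K|)-1\bigr)v_i(A_i)$, and finishes with the monotonicity of $\frac{2r-1}{r}$ in $r$. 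You instead work directly from the defining equality $\sum_{g\in\items}\min[v_i(g),t] = n\,t$ at $t = \TPSi$ and charge each bundle separately: your bounds (i)--(iii) are all valid ($c_i \le \alpha$ trivially; $c_j \le v_i(A_j) \le 2\alpha$ for multi-item bundles by half-fairness; $c_j \le t$ for singletons precisely because of truncation), the rearrangement to $\alpha \ge \frac{1+\ell}{1+2\ell}\,t$ is correct, and the monotonicity claim holds since $\frac{d}{d\ell}\frac{1+\ell}{1+2\ell} = \frac{-1}{(1+2\ell)^2} < 0$. The two arguments are structurally parallel (your $\ell$ is the paper's $n-|K|-1$, and your truncation of singleton contributions to $t$ plays exactly the role of the paper's removal of $K$ together with one agent per removed item), but yours is self-contained, needing only Definition~\ref{def:TPS}, whereas the paper's leans on the recursive TPS machinery it has already built and is therefore shorter in context. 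One point in your favor: you explicitly flag that the allocation must be complete (allocate all of $\items$), an assumption the proposition genuinely needs (the empty allocation is vacuously half-fair) and which the paper's proof uses silently when it asserts that the items of $\items\setminus K$ outside $A_i$ are divided among the other $n-|K|-1$ agents.
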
 

\begin{proof}
Let $A = (A_1, \ldots, A_n)$ be a half-fair allocation. 
Recall that $TPS_i$ has the property that for every single item $e$, $TPS_i(n-1,\items \setminus \{e\}, v_i) \ge TPS_i(n, \items, v_i)$. Let $K$ denote the set of items that are in bundles (excluding $A_i$) that contain only a single item, and let $\items' = \items \setminus K$. Then $TPS_i(n-|K|,\items', v_i) \ge TPS_i(n, \items, v_i)$. As allocation $A$ is half-fair, we have that $v_i(\items')  \le (2(n-|K|) - 1)v_i(A_i)$, since the items in $\items'\setminus A_i$ are divided between  $n-|K|-1$ agents, and each of those agents has a value (according to $v_i$) of at most $2v_i( A_i)$. Hence:

$$TPS_i \le TPS_i(n-|K|,\items', v_i) \le PS_i(n-|K|,\items', v_i) \le \frac{2(n-|K|) - 1}{n - |K|}v_i(A_i) \le \frac{2n - 1}{n}v_i(A_i),$$
which concludes the proof.
\end{proof}


We now restate and prove Proposition~\ref{pro:noParetoExAnte}, showing that ex-ante fPO is in conflict with the ex-post fairness properties that we desire.
\propEasyBoBW*
\begin{proof}
Recall the instance of Example~\ref{ex:notMMS}, where the maximin share of every agent is $n$. Consider an arbitrary fractional fPO allocation $A^*$ for this instance. We may assume that every agent $i$ receives fractions either from at least one of the items  $\{b_1,b_2, \ldots , b_{n-1}\} $ 
or from at least two of the items in $\{s_1, \ldots, s_n\}$, as otherwise agent $i$ cannot get ex-post a value larger than $1 + \epsilon$. In either of these cases, agent $i$ holds some fraction of an item (say, item $e_j$) different from $s_i$. Fractional Pareto optimality of $A^*$ then implies that $A^*$ allocates $s_i$ in full to agent $i$. (Otherwise $A^*$ can be Pareto improved. Agent $i$, who values $s_i$ more than other agents do, and values $e_j$ not more than other agents do, can trade a fraction of $e_j$ with a fraction of $s_i$, benefiting himself, and without hurting the agent who originally holds the fraction of item $s_i$.)
Consequently, every agent $i$ receives the corresponding item $s_i$ in every ex-post allocation.  By the pigeon-hole principle, in an ex-post allocation there is an agent that receives no item among  $\{b_1,b_2, \ldots , b_{n-1}\} $. This agent $i$ receives only $s_i$, and hence only a $\frac{1 + \epsilon}{n}$ fraction of her MMS.
\end{proof}

\OLD{
\ufc{Remove next proposition. It is not needed, due to the change at the end of Section 1.3.1.}

\begin{proposition}
\label{prop:mms-half-tps}
For any setting with $n$ agents, for any agent $i$ with an additive valuation $v_i$ it holds that $MMS_i\geq \frac{n}{2n-1}\cdot \TPSi$.
\end{proposition}
\begin{proof}
Let $z=\TPSiThree$.
We create the following partition of $\items$ into at most $n$ bundles by going over the items from high value to low value ones creating bundles as follows:
As long as there are items left and there are less than $n$ bundles, we create an empty bundle and add items to it until we reach a value of at least $\frac{n\cdot z}{2n-1}$, and then close the bundle and move to the next one. When we reach bundle $n$ we put all remaining items in it, marking it closed if its value reaches $\frac{n\cdot z}{2n-1}$. 

Clearly, by definition each of the closed bundles  has a value of at least $\frac{n\cdot z}{2n-1}$. We next show that least $n$ bundles have been closed.
{Let $K$ be the set of items with value of at least $\frac{n\cdot z}{2n-1}$, and let $k=|K|$.
As there are $k$ items with value of at least $\frac{n\cdot z}{2n-1}$, the $k$ first bundles are singleton bundles, each with exactly one of these items.}
{We assume that $k<n$ since otherwise already $n$ bundles has been closed, and the claim follows.}
Note that since $z =\TPSiThree \leq \TPSi(n-k,\items \setminus K, v_i )$, and $v_i(j) \leq z$
for every item $j\in \items \setminus K$, it holds that $z \leq \frac{v_i(\items \setminus K)}{n-k}$ and thus, $v_i(\items \setminus K) \geq (n-k) \cdot z $.

Assume in contradiction that bundle $n$ was not reached, or reached and was not closed. 
Every bundle after the first $k$ that was closed 
has a value of strictly less than $2 \cdot \frac{n\cdot z}{2n-1}$ (since each time we add an item of value smaller than $\frac{n\cdot z}{2n-1}$, and stop once reaching that threshold).
Thus, the total value of all items in bundles after the first $k$ is at most 
$2\cdot \frac{n\cdot z}{2n-1} \cdot (n-k-1)$,
meaning that bundle $n$ has value at least 

$$v_i(\items \setminus K) - 2\cdot \frac{n\cdot z}{2n-1} \cdot (n-k-1) \geq (n-k) \cdot z - 2\cdot \frac{n\cdot z}{2n-1} \cdot (n-k-1) \geq \frac{n\cdot z}{2n-1} $$ 
and thus was closed. 
\end{proof}
}

	
	\section{Faithful Implementations of Fractional Allocations}
		\label{sec:faithful}


In this section we present a self contained explanation of a usage of randomized rounding to obtain BoBW fairness,  the  concept we refer to as faithful implementation. We provide some historical context as to the development of various components of it, and present the proof of Lemma \ref{lem:faithful}, which summarizes the result regarding faithful implementations.

Consider a fractional allocation $A^*$ of $m$ items to $n$ agents with additive valuations. Denote the fractional allocation to agent $i$ by $A^*_i$, {with $A^*_{ij}$ denoting the fraction of item $j$ given to agent $i$ in $A^*$. Let $M^f_i= \{j \; | \; 0<A^*_{ij}<1\}$} denote the set of items for which some positive {proper} fraction (neither~0 nor~1) is allocated to $i$, and let $f = \sum_{i \in \agents} |M^f_i|$ denote the number of variables 
that are strictly fractional.

We consider generating a distribution over integral allocations from the fractional allocation $A^*$ (a ``rounding procedure").
We distinguish between three kinds of rounding:

\begin{itemize}

\item {\em Deterministic rounding.} Produces a single integral allocation.

\item {\em Randomized rounding.} Produces a distribution over integral allocations. 

\item {\em Implementation.} Randomized rounding, where the expectation of the associated distribution is exactly $A^*$. 


\end{itemize}

We consider two notions of polynomial-time algorithms for performing randomized rounding.

\begin{itemize}

\item {\em Randomized polynomial time.} There is a randomized polynomial time algorithm that samples an integer allocation from the associated distribution.

\item {\em Deterministic polynomial time.} There is a deterministic polynomial time algorithm that lists all integral allocations in the support of the distribution, together with the associated probability of each allocation. In particular, this implies that the size of the support is upper bounded by some polynomial in $n$ and $m$.

\end{itemize}

We list several {\em faithfulness} properties that may be associated with the rounding.

\begin{enumerate}

\item {\em Ex-post faithfulness}, which satisfy both of the following properties:

\begin{enumerate}

\item {\em Faithfulness from above.} In the rounded integral allocation $A$, every agent $i$ gets a bundle of value at most her fractional value,  {up to the value of one of her fractionally allocated items.} 
That is, $v_i(A_i) \le v_i(A^*_i) + \max_{j\in \items^f_i} v_i(j)$. 

\item {\em Faithfulness from below.} In the rounded integral allocation $A$, every agent $i$ gets a bundle of value at least her fractional value, {up to the value of one of her fractionally allocated items.} 
That is, $v_i(A_i) \ge v_i(A^*_i) - \max_{j\in \items^f_i} v_i(j)$.

\end{enumerate}

For an implementation of a fractional allocation, Ex-post faithfulness follows from the following single property: 
\begin{itemize}

\item {\em Small spread.} For every agent $i$, the difference in values that $i$ receives in any two rounded integral allocations is at most $\max_{j\in \items^f_i} v_i(j)$. 

\end{itemize}
We refer to a distribution over allocations as a {\em faithful implementation} of $A^*$ if it is an implementation that satisfies small spread. 

\item {\em Ex-ante faithfulness.} In the randomized rounding, every agent $i$ gets in expectation value at least equal to her fractional value. $E[v_i(A_i)] \ge v_i(A^*_i)$. Observe that by definition, an implementation of the fractional allocation is ex-ante faithful.
 
\end{enumerate}

Faithful rounding of fractional solutions has a long history, where in different times researchers added additional ingredients (from those mentioned above) that they wished to satisfy. 
We briefly mention a few past relevant works.

Independent randomized rounding  has numerous applications for approximation algorithms. 
The rounding allocates each item to at most one agent, independently of the allocation of other items. 
That is, each item $j$ is independently (from other items) allocated to at most a single agent, with each agent $i$ getting item $j$ with probability equal to $A^*_{ij}$. 
This procedure provides a randomized polynomial time implementation for the fractional allocation (and hence is ex-ante faithful), but it does not provide ex-post faithfulness guarantees. 

Deterministic (polynomial time) rounding that is faithful from above was developed in~\citep{LST90} in the context of scheduling problems. For allocation problems, faithfulness from below is a more natural requirement, and this version was presented in~\citep{BD05}. A randomized polynomial time faithful implementation (showing that the small spread property holds and making explicit use it) was presented in~\citep{Srinivasan08}. A randomized polynomial time faithful implementation for a more general setting (referred to as a bi-hierarchy) was presented in~\citep{BCKM13}. Later work was concerned with deterministic (rather than randomized) polynomial time faithful implementations, with one approach described in~\citep{FSV20}, and a somewhat simpler approach presented in~\citep{Aziz20}. Summarizing the above discussion, and marginally improving over it (in terms of the upper bound on the support of the distribution), we have the following lemma.

\lemmaFaithful*


\begin{proof}
The proof of the lemma has two parts, neither one of them is new. The first (and main) part proves the lemma but without the upper bound of $f+1$, and the second part observes that standard techniques reduce the support to size $f+1$.

For the first part, we sketch for completeness the proof approach of~\citet{Aziz20}. Recall the Birkhoff -- von Neumann theorem that says that every doubly stochastic matrix can be decomposed into a weighted sum of permutation matrices. Equivalently, every perfect fractional matching in a bipartite graph can be decomposed into a weighted sum of perfect (integral) matchings. Moreover, this can be done in polynomial time, via repeatedly finding and peeling off a bipartite perfect matchings.

We reduce the setting of Lemma~\ref{lem:faithful} to that of the Birkhoff -- von Neumann theorem, {showing how we can take $A^*$ and generate a distribution over matchings of ``clones" of each agent, that can be use to generate a distribution over allocations that is a faithful implementation of $A^*$}. 
For every agent $i$ we do the following. Let $f_i= \sum_j A^*_{ij}$ denote the total sum of fractions of items (not their values) received by $i$ under $A^*$. We replace $i$ by $\lceil f_i \rceil$ {\em clones} $c_i^1, \ldots, c_i^{\lceil f_i \rceil}$ as follows. Sort all items in order of decreasing $v_i$ value. This gives a priority order for the following sequential ``eating" process. The clones of $i$ ``eat" the fractional allocation of $i$, where each clone in its turn consumes one unit of the fractional allocation ({starting consuming only after the prior clone completed consuming}), where the unit is chosen according to the priority order. The last clone might have less than a single unit to consume.

Having done the above for all agents, we now have a fractional matching between clones and items. This is not a perfect fractional matching (the last clone of an agent may consume less than one item), but the Birkhoff -- von Neumann theorem still applies (e.g., one can add dummy clones and items as needed so as to complete the instance to a perfect fractional matching on a larger bipartite graph).
Hence we can decompose the fractional matching into integral matchings. In every integral matching, every agent gets the items received by her  clones.

Ex-post faithfulness follows from the fact that for every agent $i$, in every integral allocation, each of $i$'s clones  (except for perhaps the last one) receives one item. Let $S_{i,\max}$ ($S_{i,\min}$, respectively) be the set of items obtained by taking for each of $i$'s clones the highest priority (lowest priority, respectively) item that the clone may possibly receive. 
Then every allocation that agent $i$ may receive has value in the range $[v_i(S_{i,\min}), v_i(S_{i,\max})]$. Observe that $v_i(S_{i,\min}) \ge v_i(S_{i,\max}) - \max_{j \in \items^f_i} v_i(j)$.
This last statement can be verified by removing the most valuable item (that of clone~1) from $S_{i,\max}$, and then using the fact that for every $j \le 1$, the item of clone $j$ in $S_{i,\min}$ is at least as valuable as the item of clone $j+1$ in $S_{i,\max}$. This established the small spread property, which implies ex-post faithfulness. 


The first part of the proof provided a deterministic polynomial time implementation of $A^*$ as a distribution $D$ over polynomially many allocations $A^1, A^2, \ldots A^{\ell}$, where every allocation in the support is ex-post faithful. In the second part we reduce the size of the support to $f+1$.  For this we set up a linear program. The variable $x_k$ specifies the extent to which we include allocation $A^k$ in the new implementation of $A^*$. The set $F$ contains those pairs $(i,j)$ for which in $A^*$, agent $i$ is allocated a strictly fractional part of item $j$, and $A^*_{i,j}$ denotes this fraction.  Observe that $|F| = f$. For every allocation $A^k$ in the support of $D$, we use $A^k_{i,j}$ as an indicator of whether item $j$ is allocated to agent $i$ in $A^k$. The $A^*_{i,j}$ and $A^k_{i,j}$ values serve as coefficients in our LP. The constraints of the LP are (the objective function can be set to~0):

\begin{enumerate}

\item $\sum_{1 \le k \le \ell} x_k = 1$.

\item $\sum_{i\in \agents, \; 1 \le k \le \ell} A^k_{i,j} x_k = A^*_{i,j}$ for every $(i,j) \in F$.

\item $x_k \ge 0$ for every $1 \le k \le \ell$. 

\end{enumerate}

The above LP is feasible, as the probabilities that $D$ assigns to each $A^k$ serve as a feasible solution. In polynomial time, one can find a basic feasible solution to the LP.
The number of non-zero variables in this solution is no larger than the number of constraints (excluding the non-negativity constraints), which is $f + 1$, as desired.
\end{proof}

		\section{Approximate fNSW}\label{app:fNSW}

{To demonstrate that fractional NSW approximation ex-ante is not in conflict with ex-post guarantees, we next show that in the simplest non-trivial cases (when there are at most $n$ items to allocate to $n$ additive agents\footnote{When there are less items than agents, the TPS is zero. In this case any randomized allocation that maximizes the fNSW is also proportional, and when supported on Pareto optimal allocations (using a faithful implementation) we get a stronger claim than the one proven in Theorem \ref{thm:matching} (the fNSW approximation is perfect).} ) 
it is indeed feasible to obtain fractional NSW approximation ex-ante,  while giving every agent at least her proportional share ex-ante and at least her TPS ex-post (with every ex-post allocation being Pareto optimal). 
It is easy to observe that these ex-post properties simply imply that the allocation is always a matching. Yet note that the agents have additive valuations, and the ex-ante benchmark of fNSW maximization is stronger than fNSW maximization  for unit-demand valuations (as valuations are additive).}

\begin{theorem}
\label{thm:matching}
For every instance with $n$ agents with additive valuations over 
at most $n$ items, there is a randomized allocation with the following properties.
\begin{itemize}

\item Ex-post: every agent gets at least her truncated proportional share (TPS), and the allocation is Pareto optimal (PO).

\item Ex-ante: every agent gets at least her proportional share (PS), and the randomized allocation approximates the fNSW with a ratio no worse than $e^{\frac{e+1}{e}} \simeq 3.927$.

\end{itemize}

Moreover, the fractional allocation associated with the randomized allocation can be computed in polynomial time.
\end{theorem}

\begin{proof}
Let $v_i$ denote the valuation function of agent $i$, naturally extended to fractional allocations. We assume without loss of generality that the number of items is exactly $n$ (which can be enforced by adding items of~0 value, if needed). Let $A = (A_1, \ldots, A_n)$ denote a fractional allocation. Let $|A_i|$ denote the sum of fractions of items allocated to agent $i$. Observe that a fractional allocation $A$ maximizes fNSW $(\prod_i v_i(A_i))^{\frac{1}{n}}$ if and only if it maximizes $\sum_i \log v_i(A_i)$. As the logarithm function is concave, the following optimization problem can be solved in polynomial time (up to arbitrary precision).

{\bf Maximize}  $\sum_i \log v_i(A_i)$ {\bf subject to}:

\begin{enumerate}

\item $A = (A_1, \ldots , A_n)$ is a fractional allocation.

\item $v_i(A_i) \ge \frac{1}{n}v_i(M)$. 

\item $|A_i| = 1$ for every $i$. 
\end{enumerate}

The above optimization problem is feasible (allocating to every agent a $\frac{1}{n}$ fraction of every item is a feasible solution).
By constraint~3, after rounding, every agent gets exactly one item. As there are $n$ items, the TPS of an agent is the value of the least valuable item for her, and hence every agent gets her TPS ex-post. The ex-post allocation need not be PO (e.g., an agent may receive an item of 0-value to him, that some other agent desires), but if needed, it can be replaced by a PO allocation that Pareto-dominates it (though we do not claim that this part can be done in polynomial time). By constraint~2, every agent gets at least her proportional share ex-ante. It remains to prove the constant approximation to the fNSW.

Let $A^*$ be a fractional allocation that maximizes fNSW. We show how it can be transformed into an allocation that satisfies the constraints of the optimization problem, while losing only a constant fraction in the value of the fNSW.

We first transform $A^*$ into an allocation $B$ that satisfies constraint~3. Let $P$ denote the set of those agents that under $A^*$ receive fractions adding up to strictly more than one item. Scale down the fractional allocation of every agent $i \in P$ by $|A^*_i|$. For the remaining agents, use the freed-up fractions of items to complete their allocation in an arbitrary way so that they each receive fractionally exactly one item. This gives the allocation $B$. 

We now compare $fNSW(A^*)$ with $fNSW(B)$. Observe that $fNSW(A^*) \le fNSW(B) \left(\prod_{i\in P} |A^*_i|\right)^{\frac{1}{n}}$. But as $\sum_{i\in P} |A^*_i| \le n$, a convexity argument shows that  $\left(\prod_{i\in P} |A^*_i| \right)^{\frac{1}{n}} \le \min_x (x^{\frac{n}{x}})^{\frac{1}{n}}$. The minimizer is $x = e$, and hence $fNSW(A^*) \le fNSW(B) \cdot e^{\frac{1}{e}}$. 

We now transform $B$ into an allocation $A$ that satisfies constraint~2. This is done in rounds. Starting at round $r=1$, we do the following. 

\begin{enumerate}

\item If the fractional allocation gives every agent at least her PS, then end and return this allocation as the allocation $A$.

\item Consider an arbitrary agent that does not receive her proportional share. Denote this agent by $Z_r$. 

\item For every agent other than $Z_1, \ldots Z_{r-1}$, scale its allocation by $\frac{n-r}{n-r+1}$. Observe that from every item, a $\frac{1}{n}$ fraction is now not allocated. 

\item Replace the allocation of $Z_r$ by an allocation that gives it a $\frac{1}{n}$ fraction of every item. Now $Z_r$ receives her proportional share.

\item {For every agent not in $Z_1,\ldots,Z_r$, use fractions of items freed-up by $Z_r$ to complete her sum of fractions of items to~1 in an arbitrary way.}
\end{enumerate} 

Note that the largest possible value of $r$ is $n-1$ (because if every agent in $Z_1, \ldots, Z_{n-1}$ gets a $\frac{1}{n}$ fraction of every item, so does the remaining agent).

We now compare $fNSW(A)$ with $fNSW(B)$. In a round $r$, every agent not in $\{Z_1, \ldots, Z_r\}$ maintains at least a  $\frac{n-r}{n-r+1}$ of her value, whereas agents in $\{Z_1, \ldots, Z_r\}$ do not lose value. Hence in the rounds leading to $r$, agent $Z_r$ maintains at least a $\frac{n-1}{n} \cdot \ldots \cdot \frac{n-r}{ n-r+1} = \frac{n-r}{n}$ of its value. Hence altogether, $$fNSW(B) \le fNSW(A) \left(\prod_{r=1}^{n-1} \frac{n}{n-r}\right)^{\frac{1}{n}}.$$ Observe that $$\prod_{r=1}^{n-1} \frac{n}{n-r} = \frac{n^{n-1}}{(n-1)!} \le \frac{n^{n-1} e^{n-1}}{(n-1)^{n-1} \sqrt{2\pi(n-1)}} = (1 + \frac{1}{n+1})^{n-1}\frac{e^{n-1}}{\sqrt{2\pi(n-1)}} \le e^n,$$
showing that $fNSW(B) \le fNSW(A) \cdot e.$
\end{proof}
	
\section{Combining EF1}\label{app:EF1}
An interesting challenge is to prove a result that adds EF1 to the guarantees we provide (proportionality and half the TPS). 
For two agents, running the standard cut-and-choose protocol with a random cutter gives a randomized allocation that is proportional ex-ante (which for two agents implies also EF ex-ante), and both EFX and MMS ex-post (and $2/3$-TPS). 
Unfortunately, we do not know of a randomized allocation that obtains such a result beyond two agents. Yet, below we prove a weaker result and present a randomized allocation that is EF1,  gives every agent at least a $\frac{n}{2n-1}$ fraction of her TPS, and $\frac{n}{2n-1}$ fraction of her proportional share ex-ante. Moreover, the allocation is poly-time computable. 
We next present the randomized allocation algorithm.


\begin{enumerate}

\item First allocate a single item to every agent as follows. Pick a random order over agents. According to that order, assign every agent an item of highest value to her among those items that are not yet assigned.  

\item \label{item2} Select an arbitrary agent that no one envies, and assign to that agent the item of highest value for her, among those items that have not yet been allocated. 

\item As long as there are envy cycles, eliminate them. Go back to step (\ref{item2}).

\end{enumerate}

\begin{theorem}\label{the:EF1-half-PS}
The randomized allocation produced by the above algorithm is EF1 and $\frac{n}{2n-1}$-TPS.
Additionally, it gives every agent at least a $\frac{n}{2n-1}$ fraction of her proportional share ex-ante.  Moreover, the allocation is poly-time computable. 
\end{theorem}
The claim that the allocation is poly-time computable is immediate. Thus, the theorem follows from the next two lemmas. 

\begin{lemma}
The allocation produced by the algorithm is EF1 and $\frac{n}{2n-1}$-TPS (ex-post, for every realization). 
\end{lemma}

\begin{proof}
The allocation is EF1 because for every agent $i$, all bundles start at equal value (of 0), and thereafter, there is no round in which a bundle that $i$ envies receives an item. Hence removing the last item in a bundle $B$ not allocated to $i$, agent $i$ does not envy that bundle. (The allocation is not necessarily EFX, as the last item in $B$ need not be the one of smallest value in the eyes of $i$.)

To see the TPS approximation, consider the first item $e$ received by $i$. If $v_i(e) \ge \frac{n}{2n-1} TPS_i$, we are done. If not, then observe that every bundle with higher value than $i$'s bundle either has one item, or its value exceeds that of $v_i(B_i)$ by at most $v_i(e)$. In the former case, eliminate the items and the agent, without hurting $TPS_i$. {The value received by $i$ is at least a $\frac{n'}{2n'-1}$ fraction of her proportional share of the set of those items that are in the $n'$ remaining bundles.}
\end{proof}

\begin{lemma}
For every number of agents, the uniformly random greedy algorithm gives every agent at least a $\frac{n}{2n-1}$ fraction of her proportional share ex-ante. 
\end{lemma}

\begin{proof}
If no item by itself is valued above the proportional share, then this follows from the ex-post $\frac{n}{2n-1}$-TPS guarantee. 

Hence, let $1 \le k < n$ denote the number of items of value above TPS, let $X$ denote their total value, and let $Y$ denote the total value of remaining items. Then with probability at least $\frac{k}{n}$ the agent gets one of the top items, and conditioned on that, the expected value received is at least $\frac{X}{k}$. With the remaining probability the agent gets at least $\frac{n}{2n-1}$-TPS, where the TPS is exactly $\frac{Y}{n-k}$. Hence in expectation the agent gets a least:

$$\frac{k}{n}\cdot \frac{X}{k} + \frac{n-k}{n}\cdot \frac{n}{2n-1}\cdot \frac{Y}{n-k} \ge \frac{n}{2n-1}\cdot \frac{X+Y}{n}$$ 
which is a $\frac{n}{2n-1}$ fraction of her proportional share.
\end{proof}

There are instances in which the allocation algorithm in the proof of Theorem~\ref{the:EF1-half-PS} does not provide every her proportional share ex-ante, not even if in Step~\ref{item2} of the algorithm the agent to receive an item is chosen uniformly at random among those agents that no one envies. For example, if there are five items and valuations $(12, 10, 9, 8, 5)$ and $(10, 12, 9, 6, 5)$ then the proportional share of the first agent is $22$, this randomized version of the greedy algorithm will give her the bundle $(12, 9)$ with probability $\frac{1}{2}$, and each of the bundles $(12, 8)$ and $(12, 8, 5)$ with probability $\frac{1}{4}$, for a total expected value of $21.75$. Adding a large constant $M$ to the value of every item, the proportional share of the agent is roughly $\frac{5}{2}M$, whereas in expectation she gets $\frac{9}{4}M$, showing that one does not get better than a $\frac{9}{10}$ approximation to the proportional share ex-ante.

Unfortunately, the allocation need not give more than $\frac{n}{2n-1}$-MMS ex-post.
An example showing that the allocation need not give more than $\frac{n}{2n-1}$-MMS is as follows. For agent~1, the first $n$ items have value $\frac{n}{2n-1}$, and all remaining items have value small~$\epsilon>0$, where the proportional share and MMS are~1. For other agents, the first item has value~1, and the remaining item have value $\epsilon' > 0$,  where the proportional share and MMS are~1. If agent~1 gets the first item, the other agents might get all remaining items in reverse order (items~2 to~$n$ are allocated last, each to a different agent).

Our results imply that every two of the properties ex-post EF1, ex-post approximate TPS and ex-ante proportional can be achieved simultaneously. The combination of ex-post EF1 and $\frac{n}{2n-1}$-TPS follows from Theorem~\ref{the:EF1-half-PS}, the combination of ex-post $\frac{1}{2}$-TPS and ex-ante proportional follows from Theorem~\ref{thm:BoB}, and the combination of ex-post EF1 and ex-ante proportional follows from 
\cite{Aziz20}. The question of whether all three can be achieved simultaneously remains open.

\end{document}